\def\notes{0}
\def\fsttcs{0}

\ifnum\fsttcs=1
\documentclass[numberwithinsect,a4paper,UKenglish]{lipics}
\usepackage{microtype}
\else
\documentclass[11pt]{article}
\fi

%-----------------------------------------------------------------------------%
% Font:
%-----------------------------------------------------------------------------%

%\usepackage{textcomp}
%\usepackage{stmaryrd}

%-----------------------------------------------------------------------------%
% Various packages:
%-----------------------------------------------------------------------------%

\ifnum\fsttcs=0
\usepackage[bookmarks,colorlinks,breaklinks]{hyperref}  % PDF hyperlinks, with coloured links
\hypersetup{linkcolor=blue,citecolor=blue,filecolor=blue,urlcolor=blue} % all blue links
\usepackage{fullpage}
\else
\usepackage{hyperref}
\fi
\usepackage{amsfonts}
\usepackage{amssymb}
\usepackage{amsmath}
\ifnum\fsttcs=0
\usepackage{amsthm}
\fi
\usepackage{sectsty}
\usepackage{color}
\usepackage{underscore}  %%% To avoid LaTeX complaining of _ in non-math text

%-----------------------------------------------------------------------------%
% Theorem-like environments:
%-----------------------------------------------------------------------------%

\ifnum\fsttcs=0
\theoremstyle{plain}
\newtheorem{theorem}{Theorem}[section]
\newtheorem{lemma}[theorem]{Lemma}

\newtheorem{corollary}[theorem]{Corollary}

\theoremstyle{definition}
\newtheorem{definition}[theorem]{Definition}

\newtheorem{fact}[theorem]{Fact}

\fi
\theoremstyle{plain}
\newtheorem{claim}[theorem]{Claim}

%\newenvironment{proofof}[1]{\noindent{\bf Proof of #1:}}{\qed\\}

%-----------------------------------------------------------------------------%
% Macros:
%-----------------------------------------------------------------------------%

\newcommand{\defeq}{\stackrel{\mathrm{def}}{=}}

\def\({\left(}
\def\){\right)}

\renewcommand{\epsilon}{\varepsilon}
\renewcommand{\phi}{\varphi}

\def\X{\mathcal{X}}
\def\Y{\mathcal{Y}}
\def\Z{\mathcal{Z}}
\def\W{\mathcal{W}}

\def\R{\mathcal{R}}

\def\S{\mathcal{S}}

\def\sR{\mathsf{R}}
\def\sD{\mathsf{D}}

\def\ve{{\varepsilon}}

\mathchardef\mhyphen="2D

%misc
\newcommand{\suppress}[1]{}
\newcommand{\etal}{\emph{et al.\/}}

%%%% Macros by Attila %%%%
\newcommand {\br} [1] {\ensuremath{ \left( #1 \right) }}
\newcommand {\Br} [1] {\ensuremath{ \left[ #1 \right] }}
\newcommand {\set} [1] {\ensuremath{ \left\lbrace #1 \right\rbrace }}
\newcommand {\minusspace} {\: \! \!}
\newcommand {\smallspace} {\: \!}
\newcommand {\fn} [2] {\ensuremath{ #1 \minusspace \br{ #2 } }}
\newcommand {\Fn} [2] {\ensuremath{ #1 \minusspace \Br{ #2 } }}

\newcommand {\fndec} [3] {\ensuremath{ #1 : \, #2 \rightarrow #3 }}
\newcommand {\mutinf} [2] {\fn{\mathrm{I}}{#1 \smallspace : \smallspace #2}}
\newcommand {\condmutinf} [3] {\mutinf{#1}{#2 \smallspace \middle\vert \smallspace #3}}
\newcommand {\prob} [1] {\Fn{\Pr}{#1}}

\newcommand {\norm} [1] {\ensuremath{ \left\| #1 \right\| }}
\newcommand {\normsub} [2] {\ensuremath{ \norm{#1}_{#2} }}
\newcommand {\onenorm} [1] {\normsub{#1}{1}}

\newcommand {\relent} [2] {\fn{\mathrm{S}}{#1 \middle\| #2}}
\newcommand {\rminent} [2] {\fn{\mathrm{S}_{\infty}}{#1 \middle\| #2}}
\DeclareMathOperator*{\bigE}{\mathbb{E}}
\newcommand {\expec} [2] {\Fn{\bigE_{\substack{#1}}}{#2}}

\newcommand {\email} [1] {E-mail: \texttt{#1}.}

\newcommand {\tribes} {\ensuremath{\mathsf{Tribes}}}

\newcommand {\rec} {{\sf rectangle~bound}}
\newcommand {\srec} {{\sf smooth-rectangle~bound}}

\newcommand {\srecs} {\ensuremath{\mathsf{srec}}}

\newcommand {\infcomp} {{\sf {information~complexity}}}
\newcommand {\alice} {{\sf {Alice}}}
\newcommand {\bob} {{\sf {Bob}}}
\newcommand {\bad} {\ensuremath{\mathsf{BAD}}}
\newcommand {\bada} {\ensuremath{\mathsf{BAD_A}}}
\newcommand {\badb} {\ensuremath{\mathsf{BAD_B}}}
\newcommand {\badaorb} {\ensuremath{\mathsf{BAD_{A\vee B}}}}
%%%%

%%%%% COMMENTS %%%%%

\ifnum\notes=1
\newcommand{\phnote}[1]{\noindent {\sf{\color{magenta} [Prahladh: #1]}}}
\newcommand{\rjnote}[1]{\noindent {\sf{\color{red} [Rahul: #1]}}}
\else
\newcommand{\phnote}[1]{}
\newcommand{\rjnote}[1]{}
\fi
\newcommand{\lref}[2][]{\hyperref[#2]{#1~\ref*{#2}}}
\renewcommand{\eqref}[1]{\hyperref[#1]{(\ref*{#1})}}
\numberwithin{equation}{section}

%-----------------------------------------------------------------------------%
% PDF:
%-----------------------------------------------------------------------------%

\newcommand {\Rahul} {Rahul Jain}
\newcommand {\Prahladh} {Prahladh Harsha}
\newcommand{\CQTCS} {
    Centre for Quantum Technologies and Department of Computer Science,
    National University of Singapore, Singapore.
}
\newcommand {\TIFR} {
    Tata Institute of Fundamental Research, 
Mumbai, India.
}

\newcommand {\PdfTitle} {A strong direct product theorem for the tribes  function via the smooth-rectangle bound}

\title {{\PdfTitle}\ifnum\fsttcs=1\footnote{
The work of the second author, Rahul Jain, is supported by the Singapore Ministry of Education
Tier 3 Grant and also the Core Grants of the Centre for Quantum
Technologies, Singapore.}
\fi}
\ifnum\fsttcs=1
\titlerunning{The tribes function via the smooth-rectangle bound}
\author[1]{\Prahladh}
\author[2]{\Rahul}
\affil[1]{\TIFR \texttt{prahladh@tifr.res.in}}
\affil[2]{\CQTCS \texttt{rahul@comp.nus.edu.sg}}
\authorrunning{Prahladh Harsha and Rahul Jain}
\Copyright{Prahladh Harsha and Rahul Jain}
\subjclass{F.1.3, G.2.1}
\keywords{rectangle bound, tribes function, strong direct product}
%Editor-only macros (do not touch as author)%%%%%%%%%%%%%%%%%%%%%%%%%%%%%%%%%%%
\serieslogo{}%please provide filename (without suffix)
\volumeinfo%(easychair interface)
  {Billy Editor, Bill Editors}% editors
  {2}% number of editors: 1, 2, ....
  {Conference title on which this volume is based on}% event
  {1}% volume
  {1}% issue
  {1}% starting page number
\EventShortName{}
\DOI{10.4230/LIPIcs.xxx.yyy.p}% to be completed by the volume editor
%%%%%%%%%%%%%%%%%%%%%%%%%%%%%%%%%%%%%%%%%%%%%%%%%%%%%%%%%

\else
\author{
\Prahladh\thanks{\TIFR \email{prahladh@tifr.res.in}} \and \Rahul\thanks{\CQTCS \email{rahul@comp.nus.edu.sg}  
Supported by the Singapore Ministry of Education
Tier 3 Grant and also the Core Grants of the Centre for Quantum
Technologies, Singapore.}}
\date {}
\fi

\begin{document}

\ifnum\fsttcs=1
\maketitle
\begin{abstract}
\else
\begin{titlepage} 
\maketitle
\abstract{
\fi

The main result of this paper is an optimal strong direct product
result for the two-party public-coin randomized communication
complexity of the \tribes\ function. This is proved by providing an
alternate proof of the optimal lower bound of $\Omega(n)$ for the
randomised communication complexity of the \tribes\ function using the
so-called \srec, introduced by Jain and Klauck~\cite{JainK2010}. The
optimal $\Omega(n)$ lower bound for \tribes\ was originally proved by
Jayram, Kumar and Sivakumar~\cite{JayramKS2003}, using a more powerful
lower bound technique, namely the \infcomp\ bound. The \infcomp\ bound
is known to be at least as strong a lower bound method as the
\srec~\cite{KerenidisLLRX2012}. % Prior, to our result, \tribes\ was the
% only known function (to the best of our knowledge) whose communication
% complexity lower bound proof required a stronger technique than the
% \srec.
On the other hand, we are not aware of any function or relation for which
the \srec\ is (asymptotically) smaller than its public-coin randomized
communication complexity.  The optimal direct product for \tribes\ is
obtained by combining our \srec\ for tribes with the strong direct
product result of Jain and Yao~\cite{JainY2012} in terms of \srec.

\ifnum\fsttcs=1
\end{abstract}
\else
}
\thispagestyle{empty}
\end{titlepage}
\fi

\section{Introduction}

Study of lower bounds for various natural functions and relations has
been a major theme of research in communication complexity from its
advent; both for its own intrinsic value and for applications of these
bounds towards other areas of theoretical computer
science~\cite{KushilevitzNisan}. Several lower bound techniques have
been developed over the years in communication complexity such as
fooling sets, discrepancy method, rectangle bound, information
complexity bound, partition bound etc. It is interesting to understand
the relative power of these techniques and rank them against each
other. Sometimes, we  would like to understand what is the weakest
technique required to prove a particular lower bound.

An important and extensively used technique in communication
complexity is the so called \rec\ (a.k.a. the
{\sf corruption~bound}). In this technique, one argues that for
some output value $z$, and all large rectangles, a constant fraction
of inputs in the rectangle have a function value different from
$z$. This helps to lower bound the distributional communication
complexity of the function, which then translates to a lower bound on
the public-coin communication complexity via Yao's minmax
principle~\cite{Yao1983}. This technique has been successfully applied
to obtain optimal lower bounds for several problems; Razborov's lower
bound proof~\cite{Razborov1992} for the set-disjointness
function~\cite{KalyanasundaramS1992} is arguably the most well-known
application of this technique.

Another technique that has been extremely useful is the \infcomp\
bound~\cite{PonzioRV2001,ChakrabartiSWY2001}. In this method, one lower bounds the
distributional communication complexity by the amount of information
the transcript of the protocol reveals about the inputs of \alice\ and
\bob. The tools from information theory then come handy to lower bound
the information cost of the protocol. Bar-Yossef, Jayram, Kumar and
Sivakumar~\cite{BarYossefJKS2004} successfully used this
technique\footnote{The notion of \infcomp\ was formalized by Chakrabarti,
Shi, Wirth and Yao [CSWY01] in the direct sum context, however has
been used by earlier works as well for example by Ponzio,
Radhakrishnan and Venkatesh~\cite{PonzioRV2001} for showing optimal
lower bounds on the communication complexity of the pointer-chasing
problem. Chakrabarti~\etal~\cite{ChakrabartiSWY2001} defined and used, what in today's
language is called,``external information cost'' while Bar-Yossef, Jayram, Kumar and
Sivakumar~\cite{BarYossefJKS2004} defined and used ``internal
information cost'' in their proof of the disjointness lower bound.} to
give an alternate proof of the linear lower bound for the
set-disjointness function. This method has also been useful to give an
optimal linear lower bound for the \tribes\
function~\cite{JayramKS2003}.

Jain and Klauck~\cite{JainK2010}, using tools from linear programming
and semi-definite programming gave a uniform treatment to several of
the existing lower bound techniques and proposed two additional lower
bound techniques, the so-called {\sf partition~bound} and the
\srec. These bounds are stronger than almost all other known lower
bound techniques including the \rec. The {\sf partition~bound},
as the name suggests, is a linear programming formulation of the
number of partitions in a randomized protocol. The \srec, a weakening of
the {\sf partition~bound}, is a robust version of the \rec\ in
the following informal sense: \srec\ for a function $f$ under a distribution
$\mu$, is the maximum
over all functions $g$ , which are close to $f$
under the distribution $\mu$, of the \rec\ of $g$. In other words, a
function $f$ is said to have a large \srec, if it is close to
some other function $g$ (under the distribution $\mu)$ which has a
large \rec, even though $f$ itself might not have a large \rec. This suffices to lower bound the communication complexity
of $f$. These new lower
bound methods have been successfully applied, for example to obtain an
optimal lower bound for the Gap-Hamming
problem~\cite{ChakrabartiR2012}. In fact we are not aware (to the best
of our knowledge) of any function or relation for which the
{\sf partition~bound} or \srec\ is (asymptotically) smaller than
its public-coin randomized communication complexity. To determine how
tight these new lower bounds are, remains an important open question
in communication complexity.

Recently, Kerenidis~\etal~\cite{KerenidisLLRX2012} showed that  the
\infcomp\ is at least as powerful as the
{\sf relaxed-partition-bound}, which is a bound intermediate
between the {\sf partition~bound} and the \srec. The relative
strengths of the \infcomp\ and {\sf partition~bound} is not yet
well understood.

Another important theme in communication complexity has been the study
of the so called strong direct-product and (the weaker) direct-sum
conjectures; again for their own intrinsic value and also for
important applications of such results in other areas of theoretical
computer science~\cite{KarchmerRW1995}. A strong direct-product conjecture for the
public-coin communication complexity of a relation $f$ would state the
following. Let $c$ be the public-coin communication complexity of $f$
(with constant error). Suppose $k$ independent instances of $f$ are
being solved using communication less than $kc$, then the overall
success would be exponentially small in $k$. In fact, the \infcomp\
was introduced initially~\cite{ChakrabartiSWY2001} as a tool to
resolve the direct sum/product question. However, despite the
considerable progress made over the last few
years~\cite{BarakBCR2010,JainPY2012}, the direct product question has
not yet been resolved. On the other hand, we are not aware of any
function or relation for which this conjecture is false. Settling this
conjecture for all relations, again is an important open question in
communication complexity.

Recently, Jain and Yao~\cite{JainY2012} proved a direct-product result
for all relations in terms of the \srec\ (\srecs). They show that for
any relation $f$, if less than $k \cdot \log \srecs(f)$ communication
(c.f., \lref[Definition]{def:srec}) is provided for solving $k$
independent copies of $f$, then the overall success is exponentially
small in $k$. This provides a recipe to arrive at strong
direct-product results for any relation $f$: by exhibiting that $\log
\srecs(f)$ provides optimal lower bound for the public-coin
communication complexity of $f$. Jain and Yao's result implies (and in
some cases reproves) strong direct product result for many interesting
functions and relations including that for the set-disjointness
function (a strong direct-product result for set-disjointness was
first shown by Klauck~\cite{Klauck2010}, again via showing that the
\srec\ of a related function is large). \rjnote{Added the following} This also strongly motivates the search of functions for which their \srec\ is asymptotically smaller than their public-coin communication complexity. This leads us to the study of the \tribes\ function as described below.

\subsection{Our result} 

In this work we are concerned with the $\tribes: \{0,1\}^n \times
\{0,1\}^n \to \{0,1\}$ function, defined as follows. $$\tribes(x,y)
\defeq \bigwedge_{i=1}^{\sqrt{n}} \left(\bigvee_{j=1}^{\sqrt{n}}\left(
    x_{(i-1) \sqrt{n} +j} \wedge y_{(i-1)\sqrt{n} +j}\right)\right).$$
As mentioned earlier, an optimal linear lower bound for \tribes\ was
shown by Jayram, Kumar and Sivakumar~\cite{JayramKS2003} using the
\infcomp\ technique.  It is to be noted that the \rec\ proves only a
$\Theta(\sqrt{n})$ lower bound and thus fails to provide an optimal
lower bound for \tribes. In fact, the primary motivation for
Jayram~\etal~\cite{JayramKS2003} to study the \tribes\ function was
the fact that it provided the first example where \infcomp\ techniques
were provably stronger than the then known ``combinatorial'' lower
bound techniques. \rjnote{Added this} Therefore it is natural to ask if \tribes\ also provides a separation between \srec\ and public-coin communication complexity, in the process also implying separation between  
%Given the recent surge in combinatorial lower bound
%techniques, one can ask if \infcomp\ techniques continue to remain
%provably stronger than the combinatorial techniques (such as
%\srec, {\sf partition-bound} etc) and in particular, if \tribes\ also
%separates 
\infcomp\ bound and \srec.
%Thus, it is natural to ask if one requires as
%powerful a technique as \infcomp\ to obtain a lower bound for \tribes\
%or will a slightly stronger bound than the \rec, such as the \srec,
%suffice. 
We consider this question in this work and answer it in the negative.
\begin{theorem}[\srec\ for \tribes]\label{thm:main}\ 

For sufficiently small $\epsilon \in (0,1)$,
$\sR^{\text{\rm pub}}_\epsilon(\tribes) \geq \log \srecs_\epsilon(\tribes) \geq
\Omega(n)$. 
\end{theorem}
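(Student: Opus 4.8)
The plan is to work with the corruption-bound reformulation of the smooth-rectangle bound (see \lref[Definition]{def:srec}): to prove $\log\srecs_\epsilon(\tribes)\ge\Omega(n)$ it suffices to exhibit a distribution $\mu$ on $\{0,1\}^n\times\{0,1\}^n$, an output value $z\in\{0,1\}$, and a Boolean function $g$ with $\Pr_\mu[g\ne\tribes]=O(\epsilon)$, such that for \emph{every} combinatorial rectangle $R$ one has (up to the precise normalization in \lref[Definition]{def:srec})
$\mu\bigl(R\cap g^{-1}(z)\bigr)\le O(1)\cdot\mu\bigl(R\cap g^{-1}(\bar z)\bigr)+2^{-\Omega(n)}$.
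The first inequality $\sR^{\text{\rm pub}}_\epsilon(\tribes)\ge\log\srecs_\epsilon(\tribes)$ in the statement is the known fact that the \srec\ lower bounds public-coin communication complexity \cite{JainK2010}.

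For the distribution $\mu$ I would take a product of $\sqrt n$ independent block-distributions, where each block carries a hard instance of the $\sqrt n$-bit set-intersection problem in the spirit of the distributions of Razborov \cite{Razborov1992} and of Bar-Yossef, Jayram, Kumar and Sivakumar \cite{BarYossefJKS2004} --- in particular retaining the per-coordinate ``which player holds the $1$'' structure that makes blockwise corruption bounds usable --- but rescaled so that each block is intersecting, i.e.\ its tribe evaluates to $1$, with probability $1-\Theta(1/\sqrt n)$. This keeps $\tribes$ constant-balanced under $\mu$ while still leaving, after conditioning on the per-coordinate structure, a blockwise corruption bound of $\Omega(\sqrt n)$ available for each block. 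The function $g$ is $\tribes$ altered only on a set of small $\mu$-measure, and its whole purpose is to destroy the near-monochromatic rectangles that pin the ordinary \rec\ of $\tribes$ down to $\Theta(\sqrt n)$ --- chiefly the roughly $(\sqrt n)^{\sqrt n}$ rectangles that fix one intersecting coordinate in every block, each of which has $\mu$-measure $2^{-\Theta(\sqrt n)}$ and is monochromatic for $\tribes$ but must be made $\Omega(1)$-bichromatic for $g$.

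The technical heart is then a ``smooth corruption lemma'': with $\mu$, $z$ and $g$ as above, the displayed inequality holds for every rectangle $R$. I would prove it by applying the blockwise corruption bound to the projections of $R$ onto the $\sqrt n$ coordinate blocks, and combining the $\sqrt n$ resulting estimates --- each of order $2^{-\Theta(\sqrt n)}$ --- into a single $2^{-\Omega(n)}$ bound. The smoothing enters exactly where the naive blockwise product estimate breaks down: a product $R=\prod_i R_i$ of per-block near-monochromatic rectangles is, for $\tribes$ itself, a large monochromatic rectangle (this is the obstruction behind the \rec\ of $\tribes$ being $\Theta(\sqrt n)$), but every such $R$ necessarily meets the modified region in a constant fraction of its mass and is therefore bichromatic for $g$; one then has to push this through for \emph{arbitrary} rectangles, not just products, via a blockwise decomposition of $R$.

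The main obstacle is this last composition step. Ordinary rectangle/corruption bounds are well known to fail under direct sum --- which is precisely why the \rec\ of $\tribes$ is $\Theta(\sqrt n)$ rather than $\Theta(n)$ --- so the entire gain has to come from the robustness designed into $g$, and it cannot be obtained by a union bound over rectangles: a rectangle of $\mu$-measure $2^{-\Omega(n)}$ need not contain enough inputs to beat the doubly-exponential number of rectangles, since the hard block-distribution has only a constant fraction of $\sqrt n$ bits of min-entropy per player per block. Consequently the modification defining $g$ must be aligned with the blockwise rectangle structure rather than generic, and the real work lies in verifying simultaneously that it (i) keeps $g$ within $O(\epsilon)$ of $\tribes$ under $\mu$ and (ii) renders every rectangle $\Omega(1)$-corrupted. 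Once Theorem~\ref{thm:main} is established, combining it with the strong direct-product theorem of Jain and Yao \cite{JainY2012} yields the advertised optimal strong direct product for the public-coin randomized communication complexity of $\tribes$.
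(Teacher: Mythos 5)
Your proposal correctly identifies the right framework: prove a smooth corruption inequality for a well-chosen distribution $\mu$ and an auxiliary function $g$ that agrees with $\tribes$ except on a small-$\mu$-measure set, then invoke Jain--Klauck and Jain--Yao. You also correctly diagnose the core obstacle --- the true \rec\ of $\tribes$ is $\Theta(\sqrt n)$ because products of per-block monochromatic rectangles are large and $\tribes$-monochromatic, so $g$ must be engineered to bichromatize exactly those rectangles. Up to here the proposal is sound.

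However, the realization diverges from the paper in ways that leave a genuine gap, and you flag the gap yourself without closing it. First, your $\mu$ is a product of $\sqrt n$ \emph{i.i.d.} block distributions, each nearly-always intersecting. The paper's $\mu$ is quite different: one block $j$ is singled out as special (all other blocks have exactly one intersection deterministically), and within the special block the intersection pattern is $0$, $1$ or $2$ at designated coordinates $k_j,l_j$. That special-block design is not cosmetic: it is what makes the event set $U_2$ (two intersections in block $j$) available, and $g$ is then \emph{defined} as flipping $U_2$ from $1$ to $0$. Your proposal never actually says what $g$ is; the smoothing is the whole game, so this is not a detail you can defer. Second, and more importantly, your plan for proving the corruption inequality --- ``apply the blockwise corruption bound to the projections of $R$ onto the $\sqrt n$ coordinate blocks and combine the $\sqrt n$ estimates of order $2^{-\Theta(\sqrt n)}$ into $2^{-\Omega(n)}$'' --- is exactly the direct-sum-for-corruption step that you correctly observe does not hold in general, and you end the proposal by naming this as ``the real work'' without indicating how it would go. The paper does \emph{not} take a blockwise-combination route. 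Instead it conditions on a random variable $T$ (special block index $j$, coordinates $k_j,l_j$, and the per-block partitions) under which $(X,Y)$ becomes product, and then runs a single \emph{global} entropy argument: a rectangle of measure $\geq 2^{-\delta n}$ cannot bias more than a $1/6400$-fraction of the $r\sqrt n\approx n/2$ fresh bits of Alice's input, because $H(Z\mid E)\geq r\sqrt n(1-o(1))-\delta n$ while subadditivity caps it by $r\sqrt n\bigl(\tfrac{H(0.99/2)}{6400}+(1-\tfrac1{6400})\bigr)$. Coupled with a cut-and-paste argument inside the special block (to handle the case where neither $\rho_l$ nor $\rho_k$ dominates), this gives the per-$t$ corruption estimate directly, with no composition over blocks. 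So the $2^{-\Omega(n)}$ comes from the entropy of a \emph{single} player's input across all blocks, not from multiplying $\sqrt n$ per-block bounds. Without an analogue of this global argument --- or some genuinely new composition mechanism --- the plan as stated does not yield the theorem.
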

Here, $\sR^{\text{\rm pub}}_\epsilon(f)$ refers to the $\epsilon$-error
public-coin randomized communication complexity of $f$.

Another important motivation for our work (besides answering the above question) is its consequence to strong direct
product. As indicated in the recipe outlined above, combining our
\srec\ for \tribes\ with the result of Jain and
Yao~\cite{JainY2012}, we obtain the following.
\begin{corollary}[strong direct product for \tribes]\label{cor:sdp-tribes}
$\sR^{\text{\rm pub}}_{1-2^{-\Omega(k)}}\left(\tribes^{(k)}\right) = \Omega(kn)$.
\end{corollary}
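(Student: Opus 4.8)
The plan is to derive the corollary by plugging Theorem~\ref{thm:main} into the generic strong direct product theorem of Jain and Yao~\cite{JainY2012}, so essentially all the work is already done and only a short assembly is needed. First I would recall the precise form of the Jain--Yao result: for every relation $f$ and every integer $k$, any public-coin randomized protocol that solves $k$ independent copies $f^{(k)}$ using at most $c' \cdot k \cdot \log\srecs_{\epsilon_0}(f)$ bits of communication (for a suitable absolute constant $c'\in(0,1)$ and a fixed small error parameter $\epsilon_0$) succeeds with probability at most $2^{-\Omega(k)}$. Equivalently, to achieve success probability exceeding $2^{-\Omega(k)}$ on $f^{(k)}$ one needs $\Omega\bigl(k\log\srecs_{\epsilon_0}(f)\bigr)$ communication.

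Next I would instantiate this with $f=\tribes$. Theorem~\ref{thm:main} gives $\log\srecs_\epsilon(\tribes)\ge\Omega(n)$ for all sufficiently small $\epsilon\in(0,1)$; in particular it holds for the specific constant $\epsilon_0$ demanded by the Jain--Yao theorem (shrinking $\epsilon_0$ further if necessary, which only strengthens the smooth-rectangle bound). Substituting $\log\srecs_{\epsilon_0}(\tribes)=\Omega(n)$ into the lower bound from the previous paragraph yields that any protocol for $\tribes^{(k)}$ with success probability at least $2^{-\Omega(k)}$, i.e.\ with error at most $1-2^{-\Omega(k)}$, must use $\Omega(kn)$ communication. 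That is exactly the claimed lower bound $\sR^{\text{\rm pub}}_{1-2^{-\Omega(k)}}\!\left(\tribes^{(k)}\right)=\Omega(kn)$. The matching upper bound $O(kn)$ is trivial: the one-bit-error public-coin complexity of a single copy of \tribes\ is $O(n)$ (indeed even the deterministic complexity is $O(n)$, as \alice\ can just send her whole input), and running $k$ independent copies costs $O(kn)$; hence the bound is tight, justifying the equality ``$=\Omega(kn)$'' as stated.

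There is essentially no obstacle here — the corollary is a one-line consequence once Theorem~\ref{thm:main} is in hand. The only points requiring a modicum of care are bookkeeping ones: (i) making sure the error constant $\epsilon$ for which Theorem~\ref{thm:main} is proved is compatible with (at least as small as) the fixed error threshold in the Jain--Yao direct product theorem, which is fine since Theorem~\ref{thm:main} is stated for \emph{all} sufficiently small $\epsilon$; and (ii) tracking that the hidden constants in the two $\Omega(\cdot)$'s compose correctly, so that the communication lower bound is genuinely $\Omega(kn)$ (linear in both $k$ and $n$ simultaneously) rather than, say, $\Omega(k\sqrt{n})$ — this is guaranteed precisely because Theorem~\ref{thm:main} supplies the \emph{optimal} $\Omega(n)$ smooth-rectangle bound rather than the weaker $\Omega(\sqrt{n})$ bound obtainable from the ordinary rectangle bound.
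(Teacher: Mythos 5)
Your proposal is correct and matches the paper's argument exactly: the paper's own proof of the corollary is the single line that it ``follows by combining the above theorem and Jain--Yao's strong direct product theorem in terms of the \srec.'' The only superfluous part is your closing paragraph on the matching $O(kn)$ upper bound; the notation $\sR^{\text{\rm pub}}_{1-2^{-\Omega(k)}}(\tribes^{(k)})=\Omega(kn)$ asserts only a lower bound, so the tightness remark is a harmless aside rather than something needed to ``justify the equality.''
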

Here, $f^{(k)}$ refers to the $k$-wise direct product of the function
$f$. Our result (\lref[Theorem]{thm:main}) also exhibits for the first
time, an asymptotic separation between the \srec\ and the \rec\ for a
total function (previously a quadratic separation was known however for
the Gap-Hamming partial function~\cite{ChakrabartiR2012}).

% Prior to our result, \tribes\ was the only known function (to the best
% of our knowledge) whose communication complexity bound proof required
% a stronger technique than the \srec.
It is to be noted that the
\infcomp\ lower bound for \tribes\ was generalised to constant depth
read-once trees functions~\cite{JayramKR2009, LeonardosS2010}. Given
our results, it is interesting to ask if these lower bounds can be
obtained using the \srec\ instead, which would imply a direct product
for these functions. These alternate lower bounds might also help to
obtain bounds for super-constant depth read-once formulae.

\subsection{Our techniques}\label{sec:techniques}

It will be convenient for us to view the \tribes\ function as the
conjunction of $\sqrt{n}$ set-disjointness functions over $\sqrt{n}$
sized inputs\footnote{\phnote{15 Jul: clarified
    disjointess/intersection confusion}By the disjointness function, we refer to the
  function $\bigvee_{j=1}^{\sqrt{n}}\left(
    x_{j} \wedge y_{j}\right)$. Strictly speaking, this is the
  set-intersection problem, but as is common in this literature, we
  will abuse notation and refer to this problem as the
  set-disjointness problem.}. We refer to the $\sqrt{n}$ sized inputs to each of the
disjointness functions as a block. We consider a distribution $\mu$ on
the inputs for the \tribes\ function which has support only on the
following type of inputs: in every block, except for one block (say
$j$), the inputs to the two parties \alice\ and \bob\ are NO instances of
the disjointness function (the sets corresponding to the blocks
intersect at exactly one location) and in block $j$, there could be $0,1$ or $2$ intersections
which occur at locations $k_j$ and $l_j$. Let's refer to the three
types of subsets of inputs based on the number of intersections as
$U_0, U_1$, and $U_2$ respectively. Recall that to show that the \srec\ of
\tribes\ is large, we need to demonstrate a function $g$, close to
\tribes\ (under $\mu$), whose \rec\ is large. This function $g$ is
constructed as follows: $g$ takes  value $0$ in $U_0 \cup U_2$ and
value $1$ in $U_1$. Note that \tribes\ takes value $0$ in $U_0$ and
value $1$ in $U_1 \cup U_2$. I.e., \tribes\ and $g$ disagree on the
inputs in $U_2$. For our choice of distribution $\mu$, this
disagreement set $U_2$ will have weight $\mu(U_2) \approx 1/16$ while the weight
of the $1$-inputs will be approximately $\mu(U_1) \approx 6/16$ (i.e., $U_1$
is 6 times larger than $U_2$).

Observe that for \tribes, there are large rectangles (of size $\approx
2^{-\sqrt{n}}$ under $\mu$) which are monochromatic. We can just fix
any one coordinate in each block and force intersection there to
create large $1$-monochromatic rectangle. Similarly we can choose any
one block and force non-intersection in that entire block to create
large $0$-monochromatic rectangle. Hence the \rec\ of $\tribes$ is at
most $O(\sqrt{n})$. However, note that the $1$-monochromatic
rectangles described above are not monochromatic in $g$. Indeed, we
show that there exists constants $C $ and $D$ such that for
every large rectangle $W$ (with $\mu(W) \geq 2^{-\Omega(n)}$), 
$\mu(U_1\cap W)$ is either dominated by $C\cdot \mu(U_0\cap W)$
(this is similar to the \rec) or is dominated by $D \cdot
\mu(U_2\cap W)$. This immediately implies the \rec\ of $g$ is
$\Omega(n)$. We will prove the above statement for $D$ strictly
smaller than $6$. This fact implies that whenever $\mu(U_1\cap W)$ is not
dominated by $C\cdot \mu(U_0\cap W)$ in $W$, the ratio of $U_2$-inputs
to $U_1$-inputs in the rectangle $W$ is considerably more than the
similar ratio globally (which is $\approx 1/6$). This fact lets us 
translate the $\Omega(n)$ \rec\ for $g$ to a similar \srec\ for
\tribes.

We consider an
exhaustive collection of sub-events such that conditioned on any such
sub-event, the non-product distribution $\mu$ becomes a product
distribution. Such handling of non-product distributions, by
decomposing them into several product distributions, has been done
several times before, for instance in Razborov's
proof~\cite{Razborov1992} of the optimal lower bound for the
set-disjointness function. Assume such a conditioning exists for the
rest of this proof outline.

How does one prove that for all large rectangles $W$, either
$\mu(U_1\cap W) \leq C \mu(U_0\cap W))$ or $\mu(U_1 \cap W) \leq
D \mu(U_2 \cap W)$ for some $D$ strictly smaller than 6. Note
that one cannot prove for all rectangles $W$, $\mu(U_1 \cap W) \leq
D \mu(U_2 \cap W)$ for some $D$ strictly less than 6, since
this is false globally (i.e., $\mu(U_1) \approx 6 \mu(U_2)$). Hence,
one needs to do a case analysis\footnote{Such a case analysis is not
  required to prove \rec\ (c.f., proof of
  disjointness~\cite{Razborov1992}), but is necessary while proving a
  \srec.}. And we do this based on the values of
$\prob{X_{l_j}=Y_{l_j} =1}$ and $\prob{X_{k_j}=Y_{k_j} = 1}$.

Consider the case when $\prob{X_{l_j}= Y_{l_j}=1 } \geq \frac34
\mu(U_1 \cap W)$. Since the rectangle is large, using an entropy
argument, we can argue that in most cases, conditioned on the
sub-event $(X_{l_j}= Y_{l_j}=1)$, both $\prob{X_{k_j}=1}$ and
$\prob{Y_{k_j}=1}$ are large enough $(\approx 1/2$). Now since the distribution is
product it means that conditioned on $(X_{l_j}= Y_{l_j}=1)$,
$\prob{X_{k_j} = Y_{k_j}=1}$ is large enough and hence $\mu(U_2 \cap
W)$ is a required fraction of $\mu(U_1 \cap W)$. Similar arguments
hold for the case with the roles of $l$ and $k$ reversed.

In the third case, when $\max\{\prob{X_{l_j}= Y_{l_j}=1 },
\prob{X_{k_j}= Y_{k_j}=1 } \} \leq \frac34 \mu(U_1 \cap W)$, again
using the same entropy argument, we can show that $\prob{X_{l_j}=
  Y_{l_j}=1, X_{k_j}= Y_{k_j}=0 }$ and $\prob{X_{l_j}=Y_{l_j}=0,
  X_{k_j}= Y_{k_j}=1 }$ are large. Now, since $W$ is a rectangle, we
can show that $\prob{X_{l_j}=1, Y_{l_j}=0, X_{k_j}= 0, Y_{k_j}=1 }$
and $\prob{X_{l_j}=0, Y_{l_j}=1, X_{k_j}= 1, Y_{k_j}=0 }$ are large
using a {\em cut-and-paste} argument. This implies that $\mu(U_0 \cap
W)$ is a required fraction of $\mu(U_1 \cap W)$. This concludes our
proof outline.

We note that our distribution is similar to (and in fact inspired
from) the distribution used by Jain and
Klauck~\cite{JainK2010} while analyzing the query complexity of the
\tribes\ function. We also note  that the distribution used by Jayram,
Kumar and Sivakumar~\cite{JayramKS2003} in their \infcomp\ lower bound
for \tribes\ is different from our distribution, in particular,
their distribution does not put any support on $U_2$ inputs which
have intersections of size 2 within block $j$. However, we do add that
they also use similar in spirit, albeit different cut-and-paste
arguments in their lower bound proof.

%\subsubsection*{Organization} 

\section{Preliminaries}
\label{sec:Preliminaries}

\paragraph*{Communication Complexity:}
We begin by recalling the Yao's two-party communication
model~\cite{Yao1979} (see Kushilevitz and
Nisan~\cite{KushilevitzNisan} for an excellent introduction to the
area). Let $\X$, $\Y$ and $\Z$ be finite non-empty sets, and let $f:
\X \times \Y \to \Z$ be a function. A two-party protocol for computing
$f$ consists of two parties, {\alice} and {\bob}, who get inputs $x
\in \X$ and $y \in \Y$ respectively, and exchange messages in order to
compute $f(x,y) \in \Z$.

For a distribution $\mu$ on $\X \times \Y$, let the $\epsilon$-error
distributional communication complexity of $f$ under $\mu$ (denoted by
$\sD_{\epsilon}^{\mu}(f)$), be the number of bits communicated (for the
worst-case input) by the best deterministic protocol for $f$ with
average error at most $\epsilon$ under $\mu$.  Let $\sR^{{\text{\rm pub}}}_{\epsilon}(f)$, the public-coin randomized communication
complexity of $f$ with worst case error $\epsilon$, be the number of
bits communicated (for the worst-case input) by the best public-coin
randomized protocol, that for each input $(x,y)$ computes $f(x,y)$
correctly with probability at least $1-\epsilon$. Randomized and
distributional complexity are related by the following celebrated
result of Yao~\cite{Yao1983}.
\begin{theorem}[Yao's minmax principle~\cite{Yao1983}]
\label{thm:yao}
$\sR^{\text{\rm pub}}_{\epsilon}(f) = \max_{\mu} \sD_{\epsilon}^{\mu}(f) $.
\end{theorem}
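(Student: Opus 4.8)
The plan is to prove the two inequalities separately: the direction $\max_{\mu}\sD^{\mu}_{\epsilon}(f)\le\sR^{\text{\rm pub}}_{\epsilon}(f)$ is a one-line averaging argument, while the reverse inequality $\sR^{\text{\rm pub}}_{\epsilon}(f)\le\max_{\mu}\sD^{\mu}_{\epsilon}(f)$ will follow from von Neumann's minimax theorem (equivalently, LP duality) applied to a finite zero-sum game. Throughout I use that $\X,\Y,\Z$ are finite.

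For the easy direction I would fix an arbitrary distribution $\mu$ on $\X\times\Y$ and an optimal $\epsilon$-error public-coin protocol $\Pi$ for $f$, of cost $c\defeq\sR^{\text{\rm pub}}_{\epsilon}(f)$. Since $\Pi$ errs on \emph{every} input with probability at most $\epsilon$ over its public coins, taking the expectation over $(x,y)\sim\mu$ gives $\bigE_{(x,y)\sim\mu}\Pr_{r}[\Pi_{r}(x,y)\ne f(x,y)]\le\epsilon$, and swapping the order of the two expectations produces a fixing $r$ of the public coins whose induced deterministic protocol $\Pi_{r}$ (of communication cost at most $c$) has average error at most $\epsilon$ under $\mu$. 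Hence $\sD^{\mu}_{\epsilon}(f)\le c$ for every $\mu$, which is the inequality.

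For the reverse direction I would set $c\defeq\max_{\mu}\sD^{\mu}_{\epsilon}(f)$ and let $\mathcal{D}_{c}$ be the set of deterministic protocols for $f$ of communication cost at most $c$; because the budget $c$ bounds the length of every transcript and $\X,\Y,\Z$ are finite, $\mathcal{D}_{c}$ is a finite set. I would then consider the zero-sum game whose row player picks an input $(x,y)\in\X\times\Y$, whose column player picks $D\in\mathcal{D}_{c}$, and whose payoff to the row player is $\mathbf{1}[\,D(x,y)\ne f(x,y)\,]$. Mixed strategies of the row player are exactly distributions $\mu$ on $\X\times\Y$, and mixed strategies of the column player are exactly (normal forms of) public-coin randomized protocols of cost at most $c$: use the public string to select a member of $\mathcal{D}_{c}$ and run it, the worst-case cost being the largest cost appearing in the support. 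Both strategy sets being simplices over finite sets, von Neumann's minimax theorem applies and yields
\[
\min_{\Pi}\ \max_{(x,y)}\ \Pr\big[\Pi(x,y)\ne f(x,y)\big]
\;=\;
\max_{\mu}\ \min_{D\in\mathcal{D}_{c}}\ \Pr_{(x,y)\sim\mu}\big[D(x,y)\ne f(x,y)\big],
\]
where $\Pi$ ranges over cost-$\le c$ public-coin protocols and the minima and maxima are attained. By the choice of $c$, for every $\mu$ there is a protocol in $\mathcal{D}_{c}$ with average error at most $\epsilon$ under $\mu$, so the right-hand side is at most $\epsilon$; therefore so is the left-hand side, which exhibits a cost-$\le c$ public-coin protocol with worst-case error at most $\epsilon$, i.e.\ $\sR^{\text{\rm pub}}_{\epsilon}(f)\le c$. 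Combined with the easy direction this proves the claimed equality.

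I expect the only real obstacle to be the bookkeeping around the minimax step: one must notice that fixing the communication budget $c$ is precisely what makes the relevant set of deterministic protocols finite (so the game is a genuine finite matrix game, minimax holds with the optima attained, and no compactness or continuity issues arise), and one must carefully verify the identification between cost-$\le c$ public-coin randomized protocols and mixed strategies over $\mathcal{D}_{c}$ — in particular that a public-coin protocol may, without loss of generality, use its randomness only to select one of finitely many deterministic protocols, and that its worst-case cost equals the maximum cost over the support of that selection.
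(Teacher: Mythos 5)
The paper does not prove \lref[Theorem]{thm:yao} at all --- it states it and cites Yao's 1983 paper, so there is no internal proof to compare against. Your proposal is the standard and correct argument: the easy direction is the averaging/``good fixing of the coins'' step, and the hard direction packages deterministic protocols of cost at most $c$ into a finite zero-sum matrix game and invokes von Neumann's minimax theorem (equivalently LP duality), using the finiteness of $\X$, $\Y$, $\Z$ and the cost bound $c$ to keep both strategy sets finite. You correctly flag the two bookkeeping points that usually trip people up --- that fixing the budget $c$ is what makes $\mathcal D_c$ finite, and that cost-$\le c$ public-coin protocols are in bijection (up to equivalence) with mixed strategies over $\mathcal D_c$, with worst-case cost equal to the maximum cost in the support --- and both are handled properly. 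This is essentially Yao's original argument and the one given in Kushilevitz--Nisan; nothing to correct.
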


Given a function $f: \X \times \Y \to \Z$, the $k$-wise
direct product of $f$, denoted by $f^{(k)}$ is the function $f:\X^k \times
\Y^k \to \Z^k$ defined as follows:
$f^{(k)}((x_1,\dots,x_k),(y_1,\dots,y_k)) =
(f(x_1,y_1),\dots,f(x_k,y_k))$. The direct product/sum question
involves relating $\sR^{\text{\rm pub}}(f^{(k)})$ to $\sR^\text{\rm pub}(f)$. More
precisely, the strong direct product conjecture states that
$\sR^{\text{\rm pub}}_{1-2^{-\Omega(k)}}(f^{(k)}) =
\Omega\left(k \cdot \sR^{\text{\rm pub}}_{1/3}(f)\right)$.

\paragraph*{The smooth rectangle bound:}

The smooth rectangle bound was introduced by Jain and
Klauck~\cite{JainK2010}, as a generalization of the rectangle
bound. Informally, the \srec\ for a function $f$ under a distribution
$\mu$, is the maximum over all functions $g$ , which are close to $f$
under the distribution $\mu$, of the \rec\ of $g$.  However, it will
be more convenient for us to work with the following linear
programming formulation of \srec. Please see \cite[Lemma~2]{JainK2010}
and \cite[Lemma~6]{JainY2012} for the relations between the LP
formulation and the more ``natural'' formulation in terms of \rec. A
broad connection between the two definitions is that the variable
$\phi$ in the dual of the linear programming definition takes non-zero
values precisely at the inputs $(x,y)$ where $f$ and $g$ differ.

\begin{definition}[smooth-rectangle bound] \label{def:srec}
For a total Boolean function $f$, the $\epsilon$- smooth rectangle bound
of $f$ denoted $\srecs_\epsilon(f)$ is defined to be
$\max\{\srecs^z_\epsilon(f): z\in \{0,1\}\}$, where
$\srecs^z_\epsilon(f)$ is given by the optimal value of the following
linear program (below $\W$ represents the set of all rectangles in $\X \times \Y$). 

{\scriptsize
% \hspace{-0.4in}
% \begin{minipage}{2in}\vspace{0.1in}
%     \centerline{\underline{Primal}}\vspace{-0.1in}
%     \begin{align*}
%       & \text{min:}\quad  \sum_{W \in \W}  v_{W} \\
%        \quad &  \forall (x,y) \in  f^{-1}(z): \sum_{W: (x,y) \in W} v_{W} \geq 1 - \epsilon,\\
%       & \forall (x,y) \in  f^{-1}(z): \sum_{W: (x,y) \in W} v_{W} \leq 1,\\
%       &  \forall (x,y) \notin   f^{-1}(z): \sum_{W: (x,y) \in W} v_{W} \leq \epsilon,\\
%       & \forall W : v_{W} \geq 0 \enspace .
%     \end{align*}
% \end{minipage}
% \begin{minipage}{3in}\vspace{-0.6in}
%     \centerline{\underline{Dual}}\vspace{-0.1in}
%     \begin{align*}
%       & \text{max:}\quad   \sum_{(x,y)\in f^{-1}(z)} \left((1-\epsilon) \lambda_{x,y} - \phi_{x,y} \right)- \sum_{(x,y)\notin  f^{-1}(z)} \epsilon \cdot \lambda_{x,y}\\
%        \quad &   \forall W : \sum_{(x,y)\in f^{-1}(z)\cap W} (\lambda_{x,y} - \phi_{x,y}) - \sum_{(x,y)\in (W  \setminus f^{-1}(z))} \lambda_{x,y} \leq 1,\\
%       & \forall (x,y) : \lambda_{x,y} \geq 0 ; \phi_{x,y} \geq 0 \enspace .
%     \end{align*}
% \end{minipage}
\ifnum\fsttcs=1
\hspace{-0.4in}\begin{minipage}{2.3in}\vspace{0.1in}
    \centerline{\underline{Primal}}\vspace{-0.1in}
    \begin{align*}
      \text{min} \sum_{W \in \W}  v_{W} \\
       \sum_{W \ni (x,y)} v_{W} &\geq 1 - \epsilon,&& \forall (x,y) \in  f^{-1}(z)\\
      \sum_{W \ni (x,y)} v_{W} &\leq 1,&& \forall (x,y) \in  f^{-1}(z)\\
      \sum_{W\ni (x,y)} v_{W} &\leq \epsilon,&& \forall (x,y) \notin   f^{-1}(z)\\
       v_{W} &\geq 0, && \forall W\ .
    \end{align*}
\end{minipage}
\begin{minipage}{3.2in}\vspace{-0.6in}
    \centerline{\underline{Dual}}\vspace{-0.1in}
    \begin{align*}
      \text{max}\sum_{(x,y)\in f^{-1}(z)} \left((1-\epsilon)
        \lambda_{x,y} - \phi_{x,y} \right)- \sum_{(x,y)\notin
        f^{-1}(z)} \epsilon \cdot \lambda_{x,y}\\ 
       \sum_{(x,y)\in W_z}(\lambda_{x,y} - \phi_{x,y}) -
       \sum_{(x,y)\in W_{-z}} \lambda_{x,y} \leq 1,
       \quad \forall W\\
      \lambda_{x,y}, \phi_{x,y} \geq 0, \quad \forall (x,y)\\\
      \text{where } W_z = W \cap f^{-1}(z) \text{ and } W_{-z} =
      W\setminus f^{-1}(z) \enspace.
    \end{align*}
\end{minipage}
\else
\begin{align*}
\underline{Primal} &&&&\underline{Dual}\qquad\qquad\qquad\qquad\qquad\qquad\\
\text{min} \sum_{W \in \W}  v_{W} &&&&  \text{max}\sum_{(x,y)\in f^{-1}(z)} \left((1-\epsilon)
        \lambda_{x,y} - \phi_{x,y} \right)- \sum_{(x,y)\notin
        f^{-1}(z)} \epsilon \cdot \lambda_{x,y}\\ 
       \sum_{W \ni (x,y)} v_{W} &\geq 1 - \epsilon,&& \forall\  (x,y)
       \in  f^{-1}(z) &      \sum_{(x,y)\in W \cap f^{-1}(z)}(\lambda_{x,y} - \phi_{x,y}) -
       \sum_{(x,y)\in W\setminus f^{-1}(z)} \lambda_{x,y} \leq 1,
       & \qquad\forall\  W \in \W\\
    \sum_{W \ni (x,y)} v_{W} &\leq 1,&& \forall\  (x,y) \in  f^{-1}(z)
    &\lambda_{x,y} \geq 0, &\qquad\forall\  (x,y) \\
     \sum_{W\ni (x,y)} v_{W} &\leq \epsilon,&&\forall\  (x,y) \notin
     f^{-1}(z)
     & \phi_{x,y} \geq 0, &\qquad \forall\  (x,y) \enspace .\\
     v_W &\geq  0, && \forall\  W \in \W \enspace .
\end{align*}
\fi
}
\end{definition}

\begin{theorem}[{\cite[Theorem~1]{JainK2010}}]For all functions $f:\X
  \times \Y \to \{0,1\}$ and $\epsilon \in (0,1)$, we have $\sR^{\text{\rm pub}}_\epsilon(f) \geq \log (\srecs_\epsilon(f))$.
\end{theorem}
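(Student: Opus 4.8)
The plan is to prove the inequality directly from the primal linear program of \lref[Definition]{def:srec}: starting from an optimal public-coin protocol for $f$, I would build an explicit feasible point of that program whose objective value is at most $2^{c}$, where $c=\sR^{\text{\rm pub}}_\epsilon(f)$. Since, for a fixed output value $z$, the program computes $\srecs^{z}_\epsilon(f)$, this gives $\srecs^{z}_\epsilon(f)\le 2^{c}$ for each $z\in\{0,1\}$, and taking logarithms of the maximum over $z$ yields the theorem.

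Concretely, I would fix a public-coin protocol $P$ for $f$ with worst-case error $\epsilon$ and communication $c$, and for each value $r$ of the public randomness let $D_r$ denote the deterministic protocol obtained by hard-wiring $r$; its leaves form a partition of $\X\times\Y$ into at most $2^{c}$ combinatorial rectangles, each labelled with an output bit. For a rectangle $W$ I would set $v_W:=\Pr_r[\,W\text{ is a leaf of }D_r\text{ labelled }z\,]$ (and $v_W=0$ otherwise). Nonnegativity and membership $W\in\W$ are immediate, and $\sum_{W}v_W=\mathbb{E}_r[\#\{\text{leaves of }D_r\text{ labelled }z\}]\le 2^{c}$ bounds the objective. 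For the linear constraints, the key point is that since the leaves of a deterministic protocol partition the input space, each $(x,y)$ lies in exactly one leaf of $D_r$, so $\sum_{W\ni(x,y)}v_W=\Pr_r[D_r(x,y)=z]=\Pr[P(x,y)=z]$; this is $\ge 1-\epsilon$ (and trivially $\le 1$) when $(x,y)\in f^{-1}(z)$ by correctness, and it is $\le \epsilon$ when $(x,y)\notin f^{-1}(z)$, since then (by Booleanity $f(x,y)=1-z$) outputting $z$ is an error. Hence $\{v_W\}$ is primal-feasible with objective $\le 2^{c}$, which is the desired bound.

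The step I expect to need the most care is precisely the pair of \emph{upper}-bound constraints ($\sum_{W\ni(x,y)}v_W\le 1$ on $f^{-1}(z)$ and $\le\epsilon$ off it): these hold only because the leaves of each \emph{deterministic} $D_r$ partition the whole input space, which is what lets the sum collapse to a single probability rather than an over-count — without this property one would control only the lower-bound constraint. Everything else is routine linear-programming bookkeeping; this recovers the bound of Jain and Klauck~\cite{JainK2010}, and for our purposes it can equally well simply be cited.
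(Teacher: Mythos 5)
The paper does not prove this statement; it simply cites it from Jain and Klauck~\cite{JainK2010}, so there is no in-paper proof to compare against. Your argument is correct and is in fact the standard one from that reference: fixing the public randomness $r$ yields a deterministic protocol $D_r$ whose at most $2^c$ leaves are rectangles partitioning $\X\times\Y$, setting $v_W:=\Pr_r[W\text{ is a leaf of }D_r\text{ labelled }z]$ gives a primal-feasible point (the collapse $\sum_{W\ni(x,y)}v_W=\Pr[P(x,y)=z]$ is exactly right, and your emphasis on the partition property being what makes the two \emph{upper}-bound constraints hold is the correct place to focus), and $\sum_W v_W\le\mathbb{E}_r[\#\text{leaves}]\le 2^c$ bounds the objective for each $z\in\{0,1\}$, hence for the max. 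One small point you could make explicit: distinct leaves of a single $D_r$ are disjoint, so (apart from possibly several empty leaves collapsing onto the single rectangle $\emptyset$, which only helps the upper bound) no over-counting occurs when you pass from the expectation over $r$ to the sum over distinct rectangles $W$.
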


Jain and Yao~\cite{JainY2012} proved the following strong direct
product theorem in terms of the smooth rectangle bound.

\begin{theorem}[{\cite[Theorem~1 and Lemma~6]{JainY2012}}]\label{thm:jy-sdp} Let $f: \X
  \times \Y \to \{0,1\}$ be a Boolean function. For every $\epsilon
  \in (0,1)$, there exists small enough $\eta \in (0,1/3)$ such that
  the following holds. For all integers $k$,
$$\sR^{\text{\rm pub}}_{1-(1-\eta)^{\lfloor \eta^2 k/32\rfloor}} (f^{(k)}) \geq \frac{\eta^2}{32}\cdot k \cdot
  \left(11 \eta \cdot \log
    \srecs_{\epsilon} (f) - 3 \log \frac1\epsilon
    -2\right).$$
\end{theorem}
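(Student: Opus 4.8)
The plan is to recast the statement in distributional form and prove its contrapositive. Set $S:=\srecs_\epsilon(f)$, $m:=\floor{\eta^2 k/32}$ and $B:=11\eta\log S-3\log\tfrac1\epsilon-2$. By Yao's minimax principle (\lref[Theorem]{thm:yao}) it suffices to exhibit one input distribution on $\X^k\times\Y^k$ under which every \emph{deterministic} protocol of communication cost $c<\tfrac{\eta^2}{32}\,k\,B$ computes $f^{(k)}$ with success probability less than $(1-\eta)^m$. I take this distribution to be the product $\nu:=\mu^{\otimes k}$, where $\mu$ is obtained by writing an optimal dual solution $(\lambda,\phi)$ of the program in \lref[Definition]{def:srec}, for the bit $z$ attaining $\srecs_\epsilon(f)=\srecs^z_\epsilon(f)$, as $\lambda=N\mu$ with $\mu$ a probability distribution on $\X\times\Y$. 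From optimality I will use: $N=\Theta(S)$; $\phi$ has total mass $O(\epsilon N)$; $\beta:=\mu\bigl(f^{-1}(\bar z)\bigr)$ is a constant strictly below $1$ while $\mu\bigl(f^{-1}(z)\bigr)=\Omega(1)$ (forced by $S=\Theta(N)$); and, crucially, for \emph{every} rectangle $W\subseteq\X\times\Y$ one has $\ \mu\bigl(W\cap f^{-1}(z)\bigr)\le\mu\bigl(W\setminus f^{-1}(z)\bigr)+\tfrac1N\bigl(1+\phi(W)\bigr)$.

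Now fix a deterministic protocol $\Pi$ of cost $c$ and let $\{R_t\}$ be the induced partition of $\X^k\times\Y^k$ into at most $2^c$ rectangles $R_t=A_t\times B_t$ with labels $w_t\in\{0,1\}^k$, so that the success probability is $\sum_t\nu\bigl(\{(x,y)\in R_t:\ f(x_i,y_i)=w_t(i)\ \forall i\}\bigr)$; I bound this leaf by leaf. Fix a leaf $R=A\times B$ with label $w$. Processing coordinates in the order $1,\dots,k$ and using the chain rule, the conditional success on $R$ factors as $\prod_{i=1}^{k}\Pr\bigl[f(X_i,Y_i)=w(i)\ \big|\ R,\ (X_{<i},Y_{<i})\bigr]$. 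For a coordinate $i$ with $w(i)=z$: fixing the realised prefix $(X_{<i},Y_{<i})$ turns the $i$-th slice of $A$ into a subset of $\X$ and that of $B$ into a subset of $\Y$, whose product is a rectangle $W\subseteq\X\times\Y$; applying the displayed dual inequality to $W$, dividing by $\mu(W)$ and using that $f$ is Boolean gives $\Pr[f(X_i,Y_i)=z\mid R,\ (X_{<i},Y_{<i})]\le 1-\eta$, \emph{provided} $\mu(W)\ge 2^{-B}$ --- and it is precisely the two residual terms $\tfrac1N$ and the $O(\epsilon)$ contributed by $\phi$, after optimising them against the slack needed to reach the gap $\eta$, that dictate the shape $B=11\eta\log S-3\log\tfrac1\epsilon-2$. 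Call $i$ \emph{charged} on this path if instead $\mu(W)<2^{-B}$. For coordinates with $w(i)=\bar z$ no $(1-\eta)$ factor is available, but gathering all $j$ of them and bounding $\nu\bigl(R\cap\{f(X_i,Y_i)=\bar z\text{ for all }\bar z\text{-claimed }i\}\bigr)\le\beta^{\,j}$ shows that the leaf's contribution is at most $\beta^{\,j}(1-\eta)^{\,\#\{z\text{-claimed uncharged }i\}}\le(1-\eta)^{\,k-q}$ along this path, where $q$ is the number of charged coordinates, since $\beta\le 1-\eta$ for $\eta$ small.

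It remains to control $q$. A counting/potential argument over the protocol tree --- in the style of Klauck's strong direct product for set-disjointness~\cite{Klauck2010} --- charges each ``small residual slice'' event $B$ bits out of the $c$ available and one out of the $\le 2^c$ branchings, so that on all but a $\nu$-negligible set of paths one has $q\le c/B<\tfrac{\eta^2}{32}k$, hence $q\le m$; the slack in this accounting, together with the rounding, is what contributes the factor $\eta^2/32$ and the floor in the definition of $m$. Discarding that negligible set and summing over leaves gives overall success at most $(1-\eta)^{\,k-m}\le(1-\eta)^m$, using $m\le k/2$, which contradicts the hypothesised bound on $c$ and proves the theorem.

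The step I expect to be the main obstacle is exactly this accounting for the charged coordinates: one must show that over the random path the number of coordinates whose residual slice has shrunk below $2^{-B}$ is at most about $c/B$ with overwhelming probability, despite the fact that conditioning on a rectangle correlates the coordinates on each side and the $c$ bits of communication are not apportioned among the coordinates in advance. A subsidiary --- but quantitatively decisive --- point is that the smoothing measure $\phi$, which is supported exactly on the inputs where $f$ disagrees with the corruption-good function hidden inside the dual, must be absorbed into the $\epsilon$-error budget throughout the slice averaging; it is the cost of doing so that produces the $-3\log\tfrac1\epsilon$ and the $-2$ in $B$, so that the bound is meaningful only when $\srecs_\epsilon(f)$ is large compared with $1/\epsilon$, exactly the regime relevant for the application to \tribes.
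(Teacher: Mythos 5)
This theorem is \emph{cited} from Jain and Yao~\cite{JainY2012}, not proved in the present paper, so there is no in-paper proof to compare against. What I can say is that your sketch takes a genuinely different route from the Jain--Yao argument, and it has gaps that go beyond the one you flag. The Jain--Yao proof (building on the information-theoretic direct-product machinery of~\cite{JainPY2012}) does not do a Klauck-style leaf-by-leaf rectangle count. Instead, assuming a protocol for $f^{(k)}$ with small communication and non-negligible success, it uses a min-entropy/relative-entropy accounting over coordinates to locate a good coordinate $i$ and a conditioning event such that the induced distribution on coordinate $i$ is close (in the right sense) to a large-weight rectangle under $\mu$, and then derives a contradiction with the dual smooth-rectangle constraints for a \emph{single} copy of $f$. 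In other words, it is an ``extract a single-copy violation'' argument, not a ``multiply per-leaf bounds'' argument; this is precisely what lets it handle the correlations that your charged-coordinate count runs into.

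On your sketch itself: the step you single out as ``the main obstacle'' is indeed where the proof breaks. When you fix the realised prefix $(X_{<i},Y_{<i})$ inside a leaf $R=A\times B\subseteq\X^k\times\Y^k$, the residual object on coordinate $i$ is \emph{not} a rectangle $W\subseteq\X\times\Y$ (the set $A$ is not a product across coordinates, so fixing only $X_{<i}$ does not determine a subset of $\X$ for $X_i$ -- one is left with a mixture over the unrevealed suffix $X_{>i}$), so the dual rectangle inequality cannot be applied as written. Even granting a suitable averaging fix, the potential argument ``charge $B$ bits per small slice'' does not close: the shrinkage of the $i$-th residual distribution comes both from the $c$ bits of communication and from the conditioning on the revealed prefix, and the latter is not bounded by $c$. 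Klauck's counting argument for set-disjointness exploits special product structure that is not available for a general $f$ and a general dual witness $(\lambda,\phi)$. Finally, the properties you read off the optimal dual solution ($N=\Theta(S)$, $\phi$ mass $O(\epsilon N)$, $\mu(f^{-1}(\bar z))$ bounded away from $1$) require justification via \cite[Lemma~2]{JainK2010}/\cite[Lemma~6]{JainY2012}; they are not automatic consequences of LP optimality but of a normalisation one must construct. So as written, the argument is a plausible high-level plan but not a proof, and the concrete constants $\eta^2/32$ and $11\eta$ would in any case have to emerge from the information-theoretic bookkeeping that the cited proof actually carries out.
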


\paragraph*{Information theory:}  We need the following basic facts
from information theory. Let $\mu$ be a (probability)
distribution on a finite set $\X$ and $X$ be a random variable
distributed according to $\mu$.
Let $\mu(x)$ represent the probability of $x\in\X$ according to
$\mu$. The entropy of $X$ is defined as
$H(X)\defeq \sum_x\mu(x) \cdot \log\frac{1}{\mu(x)}.$ Entropy satisfies subadditivity:
$H(XY) \leq H(X) +H(Y).$
%The min-entropy of $X$ is defined as
%$H_\infty(X) \defeq  \min_x \log \frac1{\mu(x)}.$
%It easily follows from the definitions that $H(X) \geq
%H_\infty(X)$. The entropy and min-entropy of the uniform distribution
%on a finite set $\X$ are both exactly $\log |\X|$. Furthermore,
%conditioning on a event $E$ reduces min-entropy by at most
%$-\log \prob{E}$, i.e., $H_\infty(X|E) \geq H_\infty(X) -
%\log\frac{1}{\prob{E}}.$

\suppress{
For integer $n \geq 1$, let $[n]$ represent the set $\{1,2, \ldots,
n\}$. Let $\X$, $\Y$ be finite sets.  Let $\mu$ be a (probability) distribution on $\X$.
Let $\mu(x)$ represent the probability of $x\in\X$ according to
$\mu$. Let $X$ be a random variable distributed according to $\mu$,
which we denote by $X\sim\mu$. We use the same symbol to represent
a random variable and its distribution whenever it is clear from
the context.  The expectation value of  function $f$ on $\X$ is
denoted as 
$\expec{x \leftarrow X}{f(x)} \defeq\sum_{x\in\X} \prob{X=x} \cdot f(x). $
The entropy of $X$ is defined as
$\mathrm{H}(X)\defeq-\sum_x\mu(x) \cdot \log\mu(x)$. 

For two distributions
$\mu$, $\lambda$ on $\X$, the distribution $\mu \otimes \lambda$ is
defined as $(\mu\otimes\lambda)(x_1,x_2)\defeq\mu(x_1)\cdot\lambda(x_2)$.
Let $\mu^k\defeq\mu\otimes\cdots\otimes\mu$, $k$ times. The $\ell_1$
distance between $\mu$ and $\lambda$ is defined to be half of the
$\ell_1$ norm of $\mu - \lambda$; that is,
$\|\lambda-\mu\|_1\defeq\frac{1}{2}\sum_x|\lambda(x)-\mu(x)|=\max_{S\subseteq\X}|\lambda_S-\mu_S| , $
where $\lambda_S \defeq\sum_{x\in S}\lambda(x)$. We say that
$\lambda$ is $\ve$-close to $\mu$ if $\|\lambda-\mu\|_1\leq\ve$. The
relative entropy between distributions $X$ and $Y$ on $\X$ is
defined as
 $\relent{X}{Y} \defeq \expec{x\leftarrow X}{\log \frac{\prob{X=x}}{\prob{Y=x}}} .$
The relative min-entropy between them is defined as
$ \rminent{X}{Y} \defeq \max_{x\in\X}
    \set{ \log \frac{\prob{X=x}}{\prob{Y=x}} }.$
It is easy to see that $\relent{X}{Y} \leq \rminent{X}{Y}$. Let
$X,Y,Z$ be jointly distributed random variables. Let $Y_x$ denote the
distribution of $Y$ conditioned on $X=x$. The conditional entropy of
$Y$ conditioned on $X$ is defined as $\mathrm{H}(Y|X) \defeq
\expec{x\leftarrow X}{\mathrm{H}(Y_x)} =
\mathrm{H}(XY)-\mathrm{H}(X)$. The mutual information between $X$
and $Y$ is defined as:
$  \mutinf{X}{Y} \defeq \mathrm{H}(X)+\mathrm{H}(Y)-\mathrm{H}(XY)
    = \expec{y \leftarrow Y}{\relent{X_y}{X}}
    = \expec{x \leftarrow X}{\relent{Y_x}{Y}}. $
It is easily seen that $\mutinf{X}{Y} = \relent{XY}{X \otimes Y}$.
We say that $X$ and $Y$ are independent iff $\mutinf{X}{Y} = 0$. The
conditional mutual information between $X$ and $Y$, conditioned on
$Z$, is defined as:
$  \condmutinf{X}{Y}{Z} \defeq
    \expec{z \leftarrow Z}{\condmutinf{X}{Y}{Z=z}}
    = \mathrm{H}\br{X|Z}+\mathrm{H}\br{Y|Z}-\mathrm{H}\br{XY|Z} .$
The following {\em chain rule} for mutual information is easily
seen :
$\mutinf{X}{YZ} = \mutinf{X}{Z} + \condmutinf{X}{Y}{Z} .$
Let $X,X', Y, Z $ be jointly distributed random variables. We define the joint
distribution of $(X'Z)(Y|X)$ by:
$\Pr[(X'Z)(Y|X)=x,z,y]
\defeq \Pr[X'=x, Z=z] \cdot \Pr[Y=y|X=x].$
We say that $X$, $Y$, $Z$ is a Markov chain iff $XYZ=(XY)(Z|Y)$ and
we denote it by $X\leftrightarrow Y\leftrightarrow Z$. It is easy to
see that $X$, $Y$, $Z$ is a Markov chain if and only if
$\condmutinf{X}{Z}{Y}=0$. Ibinson, Linden and Winter~\cite{Ben2008}
showed that if $\condmutinf{X}{Y}{Z}$ is small then $XYZ$ is close
to being a Markov chain.
\begin{lemma}[\cite{Ben2008}]
    \label{lem:mutual inf and relative ent}
    For any random variables $X$, $Y$ and $Z$, it holds that
    \[ \condmutinf{X}{Z}{Y} = \min \set{ \relent{XYZ}{X'Y'Z'} :
    X' \leftrightarrow Y'\leftrightarrow Z'}. \]
    The minimum is achieved by distribution
    $X'Y'Z'=(XY)(Z|Y)$.
\end{lemma}
We will need the following basic facts. A very good text for
reference on information theory is~\cite{CoverT91}.
\begin{fact}
\label{fact:relative entropy joint convexity} Relative entropy is
jointly convex in its arguments. That is, for distributions $\mu,
\mu^1, \lambda, \lambda^1 \in \X$ and $p\in[0,1]$:
$ \relent{p \mu  + (1-p) \mu^1}{\lambda + (1-p) \lambda^1} \leq p \cdot \relent{\mu}{\lambda} + (1-p) \cdot \relent{\mu^1}{\lambda^1} .$
\end{fact}
\begin{fact}
    \label{fact:relative entropy splitting}
Relative entropy satisfies the following chain rule. Let $XY$ and
$X^1Y^1$ be random variables on $\X\times\Y$.
    It holds that:
    $ \relent{X^1Y^1}{XY} = \relent{X^1}{X}
    + \expec{x\leftarrow X^1} {\relent{Y^1_x}{Y_x}}.$
    In particular, using \lref[Fact]{fact:relative entropy joint convexity}:
    $ \relent{X^1Y^1}{X\otimes Y}
    = \relent{X^1}{X} + \expec{x\leftarrow X^1}{\relent{Y^1_x}{Y}}
    \geq \relent{X^1}{X} + \relent{Y^1}{Y}.$
\end{fact}

\begin{fact} \label{fact:mutinf is min}
    Let $XY$ and $X^1Y^1$ be random variables on $\X\times\Y$.
    It holds that
    \[  \relent{X^1Y^1}{X\otimes Y}
    \geq \relent{X^1Y^1}{X^1\otimes Y^1}=\mutinf{X^1}{Y^1}. \]
\end{fact}

\begin{fact}
    \label{fact:one norm and rel ent}
    For distributions $\lambda$ and $\mu$: \quad $0 \leq \onenorm{\lambda-\mu} \leq \sqrt{\relent{\lambda}{\mu}}$. 
\end{fact}

\begin{fact}
    Let $\lambda$ and $\mu$ be distributions on $\X$.
    For any subset $\S \subseteq \X$,
    it holds that:
    $\sum_{x \in \S} \lambda(x) \cdot
    \log \frac{\lambda(x)}{\mu(x)} \geq -1 .$
\end{fact}

\begin{fact}
\label{fact:subsystem monotone} The $\ell_1$ distance and relative
entropy are monotone non-increasing when subsystems are considered.
Let $XY$ and $X^1Y^1$ be random variables on $\X\times\Y$, then
$$ \onenorm{XY - X^1Y^1}  \geq \onenorm{X - X^1} \quad \mbox{and} \quad \relent{XY}{X^1Y^1} \geq \relent{X}{X^1} .$$
\end{fact}

\begin{fact}
    \label{fact:l1 monotone}
    For  function \fndec{f}{\X\times \R}{\Y } and random variables
    $X, X_1$ on $\X$ and $R$ on $\R$, such that $R$ is independent of $(XX_1)$, it holds that:
    $ \onenorm{Xf(X,R) - X_1f(X_1,R)} = \onenorm{X-X_1}. $
\end{fact}

\subsection*{Communication complexity}
\label{sec:Communication complexity} Let $f \subseteq \X \times \Y
\times \Z$ be a relation, $t \geq 1$ be an integer and $\ve \in
(0,1)$. In this work we only consider {\em complete} relations, that
is for every $(x,y) \in \X \times \Y$, there is some $z \in \Z$ such
that $(x,y,z) \in f$. In the two-party $t$-message public-coin model
of communication, \alice\ with input $x \in \X$ and \bob\ with input $y
\in \Y$, do local computation using public coins shared between them
and exchange $t$ messages, with \alice\ sending the first
message. At the end of their protocol the party receiving the $t$-th
message outputs some $z \in \Z$. The output is declared correct if
$(x,y,z) \in f$ and wrong otherwise. In a public-private-coin protocol, the parties use both public coins and  private coins. 
Let
$\mathrm{R}^{(t),\mathrm{\rm pub}}_{\ve}(f)$ represent the two-party
$t$-message public-coin communication complexity of $f$ with worst
case error $\ve$, i.e., the communication of the best two-party
$t$-message public-coin protocol for $f$ with error for each input
$(x,y)$ being at most~$\ve$. We similarly consider two-party
$t$-message deterministic protocols where there are no public coins
used by \alice\ and \bob. Let $\mu \in \X \times \Y$ be a distribution.
We let $\mathrm{D}_{\ve}^{(t),\mu}(f)$ represent the two-party
$t$-message distributional communication complexity of $f$ under
$\mu$ with expected error $\ve$, i.e., the communication of the best
two-party $t$-message deterministic protocol for $f$, with
distributional error  (average error over  the inputs) at most $\ve$
under $\mu$.    The following is a consequence of the  min-max theorem
in game theory, see e.g.,~\cite[Theorem~3.20,
page~36]{KushilevitzNisan}.
\begin{lemma}[Yao's principle, \cite{Yao1983}]
\label{lem:yaos principle}
$\mathrm{R}^{(t),\mathrm{\rm pub}}_{\ve}(f)=\max_{\mu}\mathrm{D}^{(t),\mu}_{\ve}(f)$.
\end{lemma}

\paragraph*{Tribes function:}
The $\tribes: \{0,1\}^n \times \{0,1\}^n \to \{0,1\}$ function is defined as follows.  Let $x,y \in  \{0,1\}^n$.
% and let $z \in  \{0,1\}^n$ be the bit-wise \myand of $x$ and $y$. 
%Let $x^i,y^i,z^i$ represent the $i$th block of $\sqrt{n}$ bits of $x,y,z$ respectively (there are $\sqrt{n}$ blocks of size $\sqrt{n}$ in each case).
Then $$\tribes(x,y) \defeq \bigwedge_{i=1}^{\sqrt{n}}
\left(\bigvee_{j=1}^{\sqrt{n}}\left( x_{(i-1) \sqrt{n} +j} \wedge y_{(i-1)\sqrt{n} +j}\right)\right).$$
}

\section{The smooth rectangle bound  for Tribes}
\label{sec:srectribes}

% The broad outline of the \srec for \tribes\ proved in this section is similar to the
% Razborov's \rec bound for \disj, as presented in
% Kushilevitz-Nisan~\cite{KushilevitzN}. 

% The $\tribes: \{0,1\}^n \times \{0,1\}^n \to \{0,1\}$ function is defined as follows.  Let $x,y \in  \{0,1\}^n$.
% % and let $z \in \{0,1\}^n$ be the bit-wise \myand of $x$ and $y$.
% % Let $x^i,y^i,z^i$ represent the $i$th block of $\sqrt{n}$ bits of
% % $x,y,z$ respectively (there are $\sqrt{n}$ blocks of size
% % $\sqrt{n}$ in each case).
% Then $\tribes(x,y) \defeq \myand_{i=0}^{\sqrt{n}-1} (\myor_{j=1}^{\sqrt{n}}( \myand(x_{i \sqrt{n} +j}, y_{i\sqrt{n} +j})))$.

In this section, we prove a linear lower bound on the randomized
communication of \tribes\  via the \srec. 

First we introduce some notation. We will prove the result for $n$ of the form
$(2r+1)^2$, where $r\geq 2$ is even. Assume the input indices $[{n}]$ to the
\tribes\ function are partitioned into $\sqrt{n}$ blocks $s_1,\dots,
s_{\sqrt{n}}$, where the $i^{th}$ block $s_i =
\{(i-1)\sqrt{n}+1, \ldots, i \sqrt{n}\}$. Thus, $$\tribes(x,y) =
\bigwedge_{i=1}^{\sqrt{n}}\left(\bigvee_{j \in s_i} ( x_j \wedge
  y_j)\right).$$  A string $x \in \{0,1\}^n$ can be
viewed both as an $n$-bit string and as a subset $x \subseteq [n]$. We
will use both these interpretations. 

\phnote{15 Jul: Added informal description of $\mu$}
Consider the distribution $\mu(x,y)$ on the inputs of the \tribes\
function defined by the following (informal) description. As mentioned
earlier, this distribution is inspired by the distribution used by
Jain and Klauck~\cite{JainK2010} while analyzing the query complexity of
the \tribes\ function. Among the $\sqrt{n}$ blocks, one of the blocks
is chosen as a special block, say block $j$. \alice's and \bob's
inputs are then chosen such that their inputs when restricted to any of
the blocks (special or non-special) have exactly $(r/2+1)$ ones each.
Furthermore, for each of the non special blocks, \alice's and \bob's
input are chosen such that their inputs, restricted to this block,
have a unique intersection (this is identical to the yes instances of
Razborov's distribution for disjointness) while for the special block
$j$, \alice's and \bob's inputs are chosen such that their inputs,
restricted to the special block, have an intersection of size 0, 1 or
2. As in the case of Razborov's distribution, the variable $t$ is used
to denote the random variable containing the index of the special
block $j$ and other relevant information such that conditioned on $t$,
the distribution $(X,Y)$ is a product distribution. The formal
description of the distribution $\mu$ is as follows:

\phnote{15 Jul: Added missing $d_j$ in special block $j$}
\begin{enumerate}
\item Choose $j \in [\sqrt{n}]$ uniformly.\\For each $i \in
  [\sqrt{n}]\setminus \{j\}$, randomly partition the indices in
  $s_i$ as follows: $s_i=(t^A_i,
  t^B_i, \{l_i\})$ into 3 disjoint sets such that $|t^A_i| = |t^B_i| =
  r$ and $l_i \in s_i$.\\ For index $j$, randomly partition
  the indices in $s_j$ as follows: $s_j = (\tilde{t}^A_j,\tilde{t}^B_j,\{k_j\},\{l_j\},\{d_j\})$ into 5
  disjoint sets such that $|\tilde{t}^A_j| = |\tilde{t}^B_j| = r-1$
  and $k_j,l_j,d_j\in s_j$. Set $t^A_j = \tilde{t}^A_j \cup \{k_j\}$
  and $t^B_j = \tilde{t}^B_j \cup \{k_j\}$. \\
Let $t = \left(j, k_j, (t^A_i, t^B_i , l_i)_{i \in
      [\sqrt{n}]}\right)$.

\item For each $i \neq j \in [\sqrt{n}]$, set the variables in block
  $s_i$ as follows:
  \begin{itemize}
  \item Set $x_{l_i} \gets 1$ and $x_{s_i \setminus (t^A_i\cup
      \{l_i\})}\gets \bar{0}$. Let $x_{t^A_i}$ be a random string of
    exactly $r/2$ ones.
    \item Set $y_{l_i} \gets 1$ and $y_{s_i\setminus (t^B_i\cup
        \{l_i\})}\gets \bar{0}$.  Let $y_{t^B_i}$ be a random string
      of exactly $r/2$ ones.
  \end{itemize}
\item Set the variables in block $s_j$ as follows:
  \begin{itemize}
  \item Let $x_{t^A_j \cup \{l_j\}}$ be a random string of exactly $r/2+1$ ones and $x_{s_j \setminus (t^A_j\cup       \{l_j\})}\gets \bar{0}$.
  \item Let $y_{t^B_j \cup \{l_j\}}$ be a random string of exactly $r/2+1$ ones and $y_{s_j \setminus (t^B_j\cup \{l_j\})}\gets \bar{0}$.
  \end{itemize}
\end{enumerate}

Let $(X,Y)$ be distributed according to $\mu$, where $X$ represents
the input to \alice\ and $Y$ represents the input to \bob. Let $T=
\left(J, K_J, (T^A_i, T^B_i , L_i )_{i \in [\sqrt{n}]}\right)$ be the
random variable (correlated with $(X,Y)$) representing $t$ distributed
as above.  Observe that though $(X,Y)$ is not a product distribution,
the conditional distribution $((X,Y)~|~T=t)$ is product for each $t$.

Partition the set of inputs (in the support of $\mu$) into 3 sets
$U_0, U_1$ and $U_2$ as follows:
$$ U_i =  \{ (x,y) ~|~ \mu(x,y) > 0 \text{ and sets $x$ and $y$ have
  exactly $\sqrt{n}-1 +i$ intersections}\}.$$
%$$U_i = \{(x,y) ~|~ \mu(x,y) > 0 \text { and blocks $x_{s_j}$ and $y_{s_j}$ have exactly $i$ intersections} \}.$$ 
Note that $U_0$ are
the $0$-inputs and $U_1 \cup U_2$ the $1$-inputs of the \tribes\ 
function while $U_0 \cup U_2$ and $U_1$ are the $0$- and $1$-inputs
respectively of the function $g$ described in 
\lref[Section]{sec:techniques}.  

Let $\beta \defeq  \frac{r+2}{r+1}$. The following facts can be easily
verified from the definition of the distribution $\mu$. For all $t$,
\begin{equation*}
\prob{X_{l_j} = 1 ~|~ T=t} = \frac\beta2; \quad \prob{X_{l_j} = X_{k_j}=1 ~|~ T=t} = \prob{X_{l_j} = 1,
  X_{k_j}=0 ~|~ T=t} = \frac\beta4.
\end{equation*}
Given this, it can be easily checked that the weights of the sets
$U_0, U_1$ and $U_2$ are as follows: $\mu(U_0) = 1 - 7\beta^2/16,
\mu(U_1) = 6\beta^2/16$, and $\mu(U_2) = \beta^2/16$. 

Our main lemma is the following (we have not optimized the
constants).
\begin{lemma}\label{lem:oneisdominated} There exists a constant
  $\delta \in (0,1)$ such that for sufficiently large $n$, the
  following holds: for every rectangle $W = A \times B$,
  we have
$$0.99 \mu(U_1 \cap W) \leq \frac{16}{3(0.99)^2}\cdot \mu (U_2
  \cap W) + \frac{16}{(0.99)^2} \mu(U_0 \cap W) + 2^{-\delta n/2 +1}.$$
\end{lemma}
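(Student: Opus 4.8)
The plan is to follow the case analysis sketched in Section~\ref{sec:techniques}, using the random variable $T$ to reduce everything to product distributions, and an entropy argument to control the behaviour of the ``free'' coordinates $k_j$ and $l_j$ inside a large rectangle. Fix a rectangle $W = A\times B$ with $\mu(W)\geq 2^{-\delta n/2}$ (if $\mu(W)$ is smaller than this threshold, the additive error term $2^{-\delta n/2+1}$ dominates and the inequality is trivial). First I would condition on $T=t$: since $\mu(\,\cdot\,\mid T=t)$ is a product distribution on $\X\times\Y$, the event $W=A\times B$ is also a product event under this conditioning, so all the independence and cut-and-paste manipulations become available. The quantities $\mu(U_0\cap W), \mu(U_1\cap W), \mu(U_2\cap W)$ then decompose as expectations over $t$ of their conditional versions, and it suffices (modulo the small-$\mu(W)$ tail) to prove the inequality conditioned on a ``good'' set of $t$'s carrying all but $2^{-\delta n/2}$ of the mass — the entropy argument below is what identifies this good set.

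Next I would set up the entropy argument. Conditioned on $T=t$ and on the rectangle $W$, the coordinates $X_{l_j}, X_{k_j}$ (on Alice's side) and $Y_{l_j},Y_{k_j}$ (on Bob's side) are, by productness, independent of each other across the Alice/Bob cut. The point is that if $W$ is large under $\mu(\cdot\mid T=t)$ — which happens for most $t$ by a Markov/averaging argument on $\mu(W)$ — then Alice's input $X$ conditioned on $A$ still has large entropy in the $\{k_j,l_j\}$ coordinates, forcing $\prob{X_{k_j}=1\mid T=t, X\in A}$ and the analogous marginals to be bounded away from $0$ and $1$ (close to the unconditioned value $\beta/2\approx 1/2$), except for a small-measure set of $t$'s. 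I would make this quantitative: the deficiency $\sqrt{n}\cdot\frac{\beta}{2} - H(\text{relevant block})$ is at most $\log(1/\mu(W\mid T=t))$, which is $O(\delta n)$ for good $t$, and since there are $\sqrt{n}$ blocks, a counting/averaging argument over the block index forces the special block's free coordinates to have near-maximal entropy for a $(1-2^{-\Omega(n)})$-fraction of the $t$'s. This is the step I expect to be the main obstacle — getting the entropy loss to propagate correctly through the conditioning on $J=j$ and on $K_J=k_j$, so that the ``most $t$'' statement is genuinely $2^{-\Omega(n)}$-close and not merely $o(1)$-close, which is what the strong direct product application ultimately needs.

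With the entropy facts in hand, I would run the three-way case split on the sizes of $\prob{X_{l_j}=Y_{l_j}=1 \mid T=t, W}$ and $\prob{X_{k_j}=Y_{k_j}=1 \mid T=t, W}$ relative to $\mu(U_1\cap W\mid T=t)$. In the case $\prob{X_{l_j}=Y_{l_j}=1\mid\cdots}\geq \frac34\mu(U_1\cap W\mid\cdots)$: conditioning further on $X_{l_j}=Y_{l_j}=1$ keeps the rectangle large, so by the entropy argument $\prob{X_{k_j}=1\mid\cdots}$ and $\prob{Y_{k_j}=1\mid\cdots}$ are each $\geq$ roughly $1/2$, hence by productness their product (which gives a $U_2$-configuration, two intersections in block $j$) is $\gtrsim 1/4$ of the $U_1$ mass — yielding $\mu(U_1\cap W)\lesssim 4\cdot\mu(U_2\cap W)$ with room to spare below $6$, in fact with the explicit constant $\tfrac{16}{3}$ after bookkeeping the $\beta$'s and the $0.99$ slack. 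The symmetric case with $k_j$ and $l_j$ swapped is identical. In the remaining case where both probabilities are $\leq\frac34\mu(U_1\cap W\mid\cdots)$, the entropy argument instead gives that $\prob{X_{l_j}=Y_{l_j}=1, X_{k_j}=Y_{k_j}=0\mid\cdots}$ and $\prob{X_{l_j}=Y_{l_j}=0, X_{k_j}=Y_{k_j}=1\mid\cdots}$ are both large; applying the cut-and-paste (rectangle) property to the two ``mixed'' corners produces configurations with a non-intersection somewhere in block $j$, i.e.\ $U_0$-configurations, so $\mu(U_1\cap W)\lesssim 16\cdot\mu(U_0\cap W)$. Finally I would re-assemble: take the expectation over the good $t$'s, absorb the bad $t$'s and the small-$\mu(W)$ case into the $2^{-\delta n/2+1}$ term, and verify that in every branch the bound $0.99\,\mu(U_1\cap W)\leq \tfrac{16}{3(0.99)^2}\mu(U_2\cap W)+\tfrac{16}{(0.99)^2}\mu(U_0\cap W)+2^{-\delta n/2+1}$ holds with the stated constants (the $0.99$ factors being exactly the slack that lets the per-$t$ near-$1/2$ estimates degrade gracefully to the global constants).
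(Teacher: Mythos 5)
Your overall architecture is right --- condition on $T=t$ to get product distributions, use an entropy/counting argument on the rectangle, run the three-way case split on the relative sizes of $\rho_l(t)$ and $\rho_k(t)$, and invoke cut-and-paste in the balanced case. That matches the paper. But the way you propose to handle the ``bad'' $t$'s contains a genuine gap, and you even flag exactly where: you hope that the entropy argument yields a set of bad $t$'s carrying only $2^{-\Omega(n)}$ of the mass, so that they can simply be absorbed into the additive $2^{-\delta n/2+1}$ term. That does not work, and the paper does something materially different.

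Concretely, the entropy argument (the paper's Claim~\ref{claim:badsmall}) only shows that, conditioned on fixing $(t^A_i, l_i)_i$, the probability over the random choice of $K_J$ that $\bada(T)=1$ is at most $1/6400$ --- a small \emph{constant}, not exponentially small. The mechanism is a linear-in-fraction entropy deficit: if a $p$-fraction of the candidate positions $k_j$ are biased after conditioning on $X\in A$, you lose roughly $p\cdot(1-H(0.99/2))\cdot r\sqrt n$ bits of entropy, which must be at most $\delta n$; this bounds $p$ by a constant, but can never drive $p$ down to $2^{-\Omega(n)}$. So the bad set has constant, not exponentially small, mass and cannot be swallowed by the additive error term. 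What the paper does instead is prove Claim~\ref{claim:badsmallforone}: the \emph{$U_1$-weight inside the rectangle} that lands on bad $t$'s is at most $\tfrac{1}{100}$ of the total $U_1$-weight in the rectangle (plus $2^{-\delta n+3}$). This relative bound is then absorbed not into the additive error but into the $0.99$ multiplicative slack on the left-hand side of the lemma --- that factor exists precisely to pay for the constant-fraction loss on bad $t$'s. Getting Claim~\ref{claim:badsmallforone} requires decomposing $\rho_l$ and $\rho_k$ into sub-events of $U_1$, relating them to $R(t,1,0)C(t,1)\badb(t)$ and $R(t,1)C(t,1,0)\bada(t)$, and then averaging over $K_J$ with $T_B$ fixed so that $R(t,1,0)$ and $C(t,1)$ become functions of $T_B$ alone; the paper explicitly notes this step is the point of departure from Razborov's corruption argument. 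Your proposal, as written, would stall at the moment you discover the bad set is not exponentially small, and you would need to import this extra machinery to close the lemma.
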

In other words, in any rectangle which contains a significant fraction
of inputs from $U_1$ (i.e., at least $2^{-\delta n/2+1}$), the weight of the $U_1$
inputs is dominated by some linear function of the weights of $U_0$ and
$U_2$ inputs. Before proving this lemma, let us first see how this
lemma implies the \srec\  for \tribes, which implies our \lref[Main Theorem]{thm:main}
\begin{theorem}[\srec\  for \tribes] There exists $\gamma \in (0,1)$
  such that for all sufficiently large $n$ and $\epsilon < 1/1000$, we
  have: 
$\srecs^{1}_\ve(\tribes) \geq 2^{\gamma \cdot n}$.
\end{theorem}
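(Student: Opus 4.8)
The plan is to derive this directly from \lref[Lemma]{lem:oneisdominated} by exhibiting a feasible solution of the \emph{dual} linear program defining $\srecs^1_\epsilon(\tribes)$ (the program of \lref[Definition]{def:srec} with $z=1$): since the dual is a maximisation, its objective value at any feasible point lower bounds $\srecs^1_\epsilon(\tribes)$. Recall $\tribes^{-1}(1)=U_1\cup U_2$ and $\tribes^{-1}(0)=U_0$. Let $\delta$ be the constant from \lref[Lemma]{lem:oneisdominated} and set $L \defeq 2^{\delta n/2-1}$. I would define, for $(x,y)$ in the support of $\mu$,
$$
\lambda_{x,y}=\begin{cases}
0.99\,L\,\mu(x,y) & (x,y)\in U_1,\\
0 & (x,y)\in U_2,\\
\tfrac{16}{(0.99)^2}\,L\,\mu(x,y) & (x,y)\in U_0,
\end{cases}
\qquad
\phi_{x,y}=\begin{cases}
\tfrac{16}{3(0.99)^2}\,L\,\mu(x,y) & (x,y)\in U_2,\\
0 & \text{otherwise},
\end{cases}
$$
and $\lambda_{x,y}=\phi_{x,y}=0$ off the support of $\mu$. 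These are non-negative by construction, so only the rectangle constraint needs to be checked.

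For feasibility, fix a rectangle $W$. Since $\lambda$ vanishes on $U_2$, $\phi$ vanishes off $U_2$, and $W\cap\tribes^{-1}(1)=W\cap(U_1\cup U_2)$ while $W\setminus\tribes^{-1}(1)=W\cap U_0$, the left-hand side of the dual constraint equals
$$
L\left(0.99\,\mu(U_1\cap W)-\tfrac{16}{3(0.99)^2}\,\mu(U_2\cap W)-\tfrac{16}{(0.99)^2}\,\mu(U_0\cap W)\right).
$$
By \lref[Lemma]{lem:oneisdominated} the parenthesised quantity is at most $2^{-\delta n/2+1}$, so the left-hand side is at most $L\cdot 2^{-\delta n/2+1}=1$; thus the constraint holds for all rectangles simultaneously, the additive slack term of the lemma taking care of the rectangles in which $U_1$ is scarce.

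It then remains to lower bound the objective. With the above assignment it equals
$$
L\left(0.99(1-\epsilon)\,\mu(U_1)-\tfrac{16}{3(0.99)^2}\,\mu(U_2)-\epsilon\,\tfrac{16}{(0.99)^2}\,\mu(U_0)\right).
$$
Substituting $\mu(U_1)=6\beta^2/16$, $\mu(U_2)=\beta^2/16$ and $\mu(U_0)\le 1$, with $\beta=\tfrac{r+2}{r+1}>1$, the first two terms combine to $\tfrac{\beta^2}{16}\left(6\cdot 0.99(1-\epsilon)-\tfrac{16}{3(0.99)^2}\right)$. The decisive point (and the reason \lref[Lemma]{lem:oneisdominated} is stated with the constant $\tfrac{16}{3(0.99)^2}\approx 5.44$ rather than the global ratio $\mu(U_1)/\mu(U_2)=6$) is that $\tfrac{16}{3(0.99)^2}<5.94=6\cdot 0.99$, so for $\epsilon<1/1000$ this term is at least a fixed positive constant (about $0.03$), while the last term is at most $\epsilon\cdot\tfrac{16}{(0.99)^2}<0.02$. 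Hence the objective is at least a positive constant times $L=2^{\delta n/2-1}$, which exceeds $2^{\gamma n}$ once $\gamma\defeq\delta/4$ and $n$ is large enough; this proves $\srecs^1_\epsilon(\tribes)\ge 2^{\gamma n}$, and combined with $\sR^{\text{\rm pub}}_\epsilon(f)\ge\log\srecs_\epsilon(f)$ it yields \lref[Theorem]{thm:main}.

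Essentially all of the real work lives in \lref[Lemma]{lem:oneisdominated}; the step above is pure LP duality plus the arithmetic observation that the corruption-type constant can be kept strictly below $6$. The only subtlety here is to track the lower-order terms depending on $\beta$ and $\epsilon$ carefully enough that the objective stays bounded away from $0$, for which the estimates above leave a comfortable margin.
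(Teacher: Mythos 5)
Your proof is correct and is essentially the paper's own argument: you exhibit the same dual-LP feasible point (with the identical choice of $\lambda$ and $\phi$ scaled by $L=2^{\delta n/2-1}$), invoke \lref[Lemma]{lem:oneisdominated} to verify the rectangle constraint in exactly the same way, and lower bound the objective by the same arithmetic, differing only in using the looser $\mu(U_0)\le 1$ where the paper plugs in $\mu(U_0)=1-7\beta^2/16$. The conclusion and the constant margins work out to the same thing.
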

\begin{proof} 
We will prove the bound using the dual formulation for \srec\ given in \lref[Definition]{def:srec}.
Define the dual variables $\lambda_{x,y}$ and $\phi_{x,y}$ as follows:
\begin{align*}
\lambda_{x,y}& = 
\begin{cases}
   0  & \text{ if } \quad (x,y) \in U_2\\
  0.99 \mu(x,y) 2^{\delta n/2-1} & \text{ if } \quad (x,y) \in U_1\\
 \frac{16}{(0.99)^2}  \mu(x,y) 2^{\delta n/2-1}  & \text{ if } \quad
 (x,y) \in U_0.\\
\end{cases}\\
\phi_{x,y}  &= 
\begin{cases}
\frac{16}{3(0.99)^2} \mu(x,y) 2^{\delta n/2-1}  & \text{ if } \quad  (x,y) \in U_2\\
0  & \text{ if } \quad (x,y) \in U_1 \cup U_0.
\end{cases}
\end{align*}
From \lref[Lemma]{lem:oneisdominated} we get 
 $$\forall \text{ rectangles } W : \sum_{(x,y)\in \tribes^{-1}(1)\cap W} (\lambda_{x,y} - \phi_{x,y}) - \sum_{(x,y)\in (W  \setminus \tribes^{-1}(1))} \lambda_{x,y} \leq 1 .$$
The objective of the LP can be bounded as follows:
\begin{align*}
& \sum_{(x,y)\in \tribes^{-1}(1)} \left((1-\epsilon) \lambda_{x,y} -
  \phi_{x,y} \right)- \sum_{(x,y)\notin  \tribes^{-1}(1)} \epsilon
\cdot \lambda_{x,y}\\
& \geq \left( (0.999)(0.99) \mu(U_1) - \frac{16}{3(0.99)^2}\mu(U_2) -
  \frac{16}{1000(0.99)^2}\mu(U_0)\right)2^{\delta n/2-1}\\
& \geq 0.02\cdot 2^{\delta n /2 -1} \qquad \text{(for sufficiently
  large n)}.
\end{align*}
Thus, proved.
\end{proof}

\lref[Corollary]{cor:sdp-tribes} follows by combining the above
theorem and Jain-Yao's strong direct product theorem in terms of the
\srec\ (\lref[Theorem]{thm:jy-sdp}).

\subsection{Proof of Lemma~\ref*{lem:oneisdominated}}

Let $W=A\times B$ be the rectangle.  For each $t = \left(j,k_j,
  (t^A_i,t^B_i,l_i)_{i \in [\sqrt{n}]}\right)$ and $a,b\in
\{0,1\}$, define,
\begin{align*}
R(t, a, b) = \prob{X \in A ~|~ T=t, X_{l_j} = a, X_{k_j} = b}&, R(t, a) = \prob{X \in A ~|~ T=t, X_{l_j} = a},\\
C(t, a, b) = \prob{Y \in B ~|~ T=t,  Y_{l_j} = a, Y_{k_j} =b}&, C(t, a) = \prob{Y \in B ~|~ T=t,  Y_{l_j} = a}.
\end{align*}

Define the following random variables (we will set $\delta$ later): $$\bada(t) = 1 \text{
  iff }\min\{R(t,1,1),R(t,1,0)\} < 0.99\left(R(t,1) - 2^{-\delta
    n}\right),$$ and symmetrically, $$\badb(t)=1 \text{ iff
}\min\{C(t,1,1), C(t,1,0)\} < 0.99\left(C(t,1) - 2^{-\delta
    n}\right).$$

For a given $t$, let $t'$ denote a partition identical to $t$ except
that the role of the indices $l_j$ and $k_j$ are exchanged (i.e.,
$k'_j = l_j, l'_j = k_j, (t^A_j)' = \tilde{t}^A_j \cup \{l_j\}$ and
$(t^B_j)'= \tilde{t}^B_j \cup \{l_j\}$). To define $\bad(t)$, we need the
following two quantities.
\begin{eqnarray*}
\rho_l(t) & = & \prob{X_{l_j} =Y_{l_j}=1, X \in A, Y \in B, (X,Y) \in U_1 ~|~ T=t},\\
\rho_k(t) & = &\prob{X_{k_j} =Y_{k_j}=1, X \in A, Y \in B, (X,Y) \in U_1 ~|~ T=t}. 
\end{eqnarray*}
Observe that $\mu(U_1 \cap W ~|~ T=t) =
\rho_l(t) +\rho_k(t)$. Hence, it must be the case that exactly one of
the following happens: (1) $\rho_l(t) > 3\mu(U_1 \cap W ~|~ T=t)/4$,
(2) $\rho_k(t) > 3\mu(U_1 \cap W ~|~ T=t)/4$ or (3) $\max\{\rho_l(t),\rho_l(t)\} \leq 3\mu(U_1 \cap W ~|~
T=t)/4$ (equivalently, $\min\{\rho_l(t),\rho_l(t)\} \geq \mu(U_1 \cap W ~|~
T=t)/4$). We define $\bad(t)$ based on these cases as follows.
\begin{eqnarray}
\bad(t) = \begin{cases} 
  \bada(t) \vee \badb(t), & \text{if }\rho_l(t) > \frac{3\mu(U_1 \cap W ~|~ T=t)}4\\
  \bada(t') \vee \badb(t'), &\text{if } \rho_k(t) > \frac{3\mu(U_1 \cap W ~|~ T=t)}4\\
  \bada(t) \vee \badb(t) \vee \bada(t') \vee \badb(t'), &
  \text{otherwise.}
\end{cases}\label{eq:badt}
\end{eqnarray}

The following claim shows that the probability that $\bada(T)$ and $\badb(T)$ occurs
is small.

\begin{claim} \label{claim:badsmall} There exists  a small fixed
  constant $\delta > 0$ such that for sufficiently large $n$, the
  following holds: for any $(t^A_i , l_i)_{i \in
    [\sqrt{n}]}$, we have 
$$\prob{\bada(T) = 1 ~|~ T^A_i = t^A_i , L_i = l_i, \text{ for each } i \in [\sqrt{n}]} < \frac{1}{6400}. $$
(Symmetrically, for any $(t^B_i,l_i)_{i}$, $\prob{\badb(T)=1
  ~|~ T^B_i = t^B_i , L_i = l_i, \mbox{ for each } i \in [\sqrt{n}]} <
\frac1{6400}$.)  
% Let $\alpha, \alpha', \delta , \beta, p > 0$ be such that $H(p (1 - \alpha)) <H(p) (1 - \alpha')$ and $\delta < \frac{1}{3}H(p) \beta \alpha'$. 
%  Let the binary random variable $\bada(t) = 1$ if  $\min\{R(t,1,1),R(t,1,0)\} < (1- \alpha)R(t,1) - 2^{-\delta n} $, and $0$ otherwise.  Similarly let the binary random variable $\badb(t)= 1$ if $\min\{C(t,1,1), C(t,1,0)\}  < (1- \alpha)C(t,1) - 2^{-\delta n} $, and $0$ otherwise.   For each $\{t^B_i , l_i  : ~ i \in [\sqrt{n}]\}$ and $\{t^A_i , l_i  : ~ i \in [\sqrt{n}]\}$  we have,  
% $$\prob{\bada(T)=1 ~|~ T^B_i = t^B_i , L_i = l_i, \mbox{ for each } i \in [\sqrt{n}]} < \beta. $$
% $$\prob{\badb(T)=1 ~|~ T^A_i = t^A_i , L_i = l_i, \mbox{ for each } i \in [\sqrt{n}]} < \beta. $$
\end{claim}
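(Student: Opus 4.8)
The plan is to reduce the event $\{\bada(T)=1\}$, under the stated conditioning, to a statement about a single coordinate, and then bound the number of ``bad'' coordinates by an entropy count. Fix a valid configuration $(t^A_i,l_i)_{i\in[\sqrt n]}$ and condition on $\{T^A_i=t^A_i,\ L_i=l_i:i\in[\sqrt n]\}$ throughout. The first --- and main --- observation is that for a partition $t$ with special block $j$ and special coordinate $k=k_j$, the conditional law of $X$ given $T=t$ and $X_{l_j}=1$ is the \emph{same} for every choice of $(j,k)$: it sets $x_{l_i}\gets1$ in every block $i$, places a uniformly random weight-$(r/2)$ string on $t^A_i$, and $0$ on all remaining coordinates. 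Call this distribution $\nu$; it is uniform on a support of size $\binom{r}{r/2}^{\sqrt n}$, and $\prob{X_k=1\mid X\sim\nu}=\tfrac12$ for each $k$ in $\Lambda\defeq\bigcup_i t^A_i$. Setting $\alpha\defeq\nu(A)$ and $\gamma_k\defeq\prob{X\in A\mid X\sim\nu,\ X_k=1}-\prob{X\in A\mid X\sim\nu,\ X_k=0}$, one gets $R(t,1,1)=\alpha+\gamma_k/2$, $R(t,1,0)=\alpha-\gamma_k/2$ and $R(t,1)=\alpha$, so
\[
\bada(t)=1 \iff \alpha-\tfrac12|\gamma_k| < 0.99\bigl(\alpha-2^{-\delta n}\bigr) \iff |\gamma_k| > 0.02\,\alpha + 1.98\cdot 2^{-\delta n}.
\]
Hence $\bada(t)=1$ is a predicate of $k=k_j$ alone, it forces $\alpha>2^{-\delta n}$, and it implies that the bias of $X_k$ under $\nu$ conditioned on $X\in A$ satisfies $\bigl|\prob{X_k=1\mid X\sim\nu,\ X\in A}-\tfrac12\bigr|=\tfrac{|\gamma_k|}{4\alpha}>0.005$. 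A direct computation with the definition of $\mu$ shows that conditioning on $\{T^A_i=t^A_i,\ L_i=l_i\}_i$ leaves $J$ uniform on $[\sqrt n]$ and $K_J$ uniform on $t^A_J$, hence $K_J$ uniform on $\Lambda$; so the probability in the claim equals $|\{k\in\Lambda:\bada\text{ fires for }k\}|/|\Lambda|$, with $|\Lambda|=(2r+1)r\ge n/4$.

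It remains to bound $|\{k\in\Lambda:|\gamma_k|>0.02\,\alpha\}|$. If $\alpha\le2^{-\delta n}$ the predicate above is never satisfied and the claim is immediate, so assume $\alpha>2^{-\delta n}$, i.e.\ $\log(1/\alpha)<\delta n$. Let $\nu_A\defeq\nu(\,\cdot\mid X\in A)$; being uniform on $A\cap\operatorname{supp}(\nu)$, it has $H(\nu_A)=H(\nu)-\log(1/\alpha)=(2r+1)\log\binom{r}{r/2}-\log(1/\alpha)$. Writing $p_k\defeq\prob{X_k=1\mid X\sim\nu_A}$ (so $p_k=\tfrac12+\tfrac{\gamma_k}{4\alpha}$) and $h$ for the binary entropy function, subadditivity of entropy over the coordinates in $\Lambda$ gives
\[
\sum_{k\in\Lambda}\bigl(1-h(p_k)\bigr) \le |\Lambda|-H(\nu_A) = (2r+1)\Bigl(r-\log\binom{r}{r/2}\Bigr)+\log\tfrac1\alpha \le O(\sqrt n\log n)+\delta n,
\]
using $r-\log\binom{r}{r/2}=O(\log r)$. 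Every $k$ for which $\bada$ fires has $|p_k-\tfrac12|>0.005$, hence $1-h(p_k)\ge c_0$ for an absolute constant $c_0>0$; therefore $|\{k\in\Lambda:\bada\text{ fires for }k\}|\le(\delta n+O(\sqrt n\log n))/c_0$.

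Combining the two parts, $\prob{\bada(T)=1\mid(T^A_i=t^A_i,\ L_i=l_i)_i}\le(\delta n+O(\sqrt n\log n))/\bigl(c_0(2r+1)r\bigr)\le 4\delta/c_0+o(1)$, which is below $1/6400$ once $\delta$ is a small enough absolute constant and $n$ is large (this fixes the choice of $\delta$). The $\badb$ statement is identical with $A,X,t^A$ replaced by $B,Y,t^B$. I expect the only genuinely delicate point to be the opening observation --- that the law of $X$ given $(T=t,\ X_{l_j}=1)$ really is independent of $(j,k_j)$, and that the conditioning keeps $J$ uniform and $K_J$ uniform on $\Lambda$ --- after which the entropy count is routine, the $O(\sqrt n\log n)$ slack being harmless since it is $o(n)$.
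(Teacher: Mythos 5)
Your proof is correct and takes essentially the same route as the paper's: both reduce $\bada(t)$ to a bias condition on the single bit $X_{k_j}$ under the uniform conditional law $\nu_A$ (the paper's $Z\mid E$), use the fact that $K_J$ is uniform over $\Lambda=\bigcup_i t^A_i$ after conditioning on $(T^A_i,L_i)_i$, and then control the number of biased coordinates by subadditivity of entropy. The only cosmetic difference is that you bound the count of biased coordinates directly, whereas the paper assumes $\prob{\bada(T)=1}\ge1/6400$ and derives an entropy contradiction; these are the same argument.
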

\begin{proof}
  We prove the inequality involving $\bada(T)$. The other inequality
  is proved similarly. We first consider the easy case when $(t^A_i ,
  l_i)_{i \in [\sqrt{n}]}$ satisfies $$\prob{X \in A ~|~ X_{l_i} =1,
    T^A_i = t^A_i , L_i = l_i, \mbox{ for each } i \in [\sqrt{n}]} <
  2^{-\delta n}.$$ It follows from the definition of the distribution
  $\mu$, that the above probability is unchanged on further
  conditioning by $T=t$ for any $t$ consistent with $(t^A_i ,
  l_i)_{i \in [\sqrt{n}]}$. In other words, this probability is equal to
  $R(t,1)=\prob{X \in A ~|~ T=t, X_{l_j} = 1}$ for any $t$ consistent
  with $(t^A_i , l_i)_{i \in [\sqrt{n}]}$. Hence, for any such $t$ we
  have that $R(t,1) < 2^{-\delta n}$. Thus, in this case $\bada(t) =0$
  for all such $t$ and we are done.

Now consider the other case when 
\begin{equation}\label{eq:othercase}
\prob{X\in A ~|~ X_{l_i} =1, T^A_i = t^A_i , L_i = l_i, \mbox{ for
    each } i \in [\sqrt{n}]} \geq 2^{-\delta n} .
\end{equation}

Consider a $t=(j,k_j, (t^A_i,t^B_i,l_i)_{i\in [\sqrt{n}]})$ consistent with $(t^A_i ,
  l_i)_{i \in [\sqrt{n}]}$. We know that the bit $(X_{k_j} ~|~ T=t,
  X_{l_j}=1)$ is a unbiased bit. Now, suppose $\bada(t)
=1$. Then, for some $a \in \{0,1\}$, we have  
$$\prob{ X \in A ~|~ T=t, X_{l_j} = 1, X_{k_j} = a }  <   0.99
\left(\prob{ X \in A~|~ T=t, X_{l_j} = 1 }\right).$$
By a simple rewriting of the above inequality, we have
\begin{equation}\label{eq:aislesslikely}
\prob{X_{k_j} = a ~|~ X \in A, T=t, X_{l_j} = 1} < 0.99
\left(\prob{X_{k_j} = a ~|~ T=t, X_{l_j} = 1} \right)= 0.99/2.
\end{equation}
In other words, the unbiased bit $(X_{k_j}~|~T=t, X_{l_j} = 1)$ when
conditioned on the event ``$X \in A$'' is now more likely to be $1-a$
than $a$.

\phnote{15 Jul: Added conditioning in def of E, please check if
  correct} Suppose, for contradiction, that $$\prob{\bada(T)=1 ~|~
  T^A_i = t^A_i, L_i = l_i, \text{ for each } i \in [\sqrt{n}]} \geq
\frac{1}{6400}. $$ Consider the random variable $$Z\defeq (X~|~X_{l_i}
=1, T^A_i = t^A_i , L_i = l_i, \mbox{ for each } i \in [\sqrt{n}]). $$
Note that the distribution of $Z$ is uniform and each string has
probability $\left(\frac{1}{\binom{r}{r/2}}\right)^{\sqrt{n}}$. \phnote{15 Jul: Added conditioning in
  def of E, please check if correct} Consider the event $E\defeq
``X\in A ~|~ T=t, X_{l_j}=1"$, which by \eqref{eq:othercase} has
probability at least $2^{-\delta n}$. Therefore the probability of
each string in the distribution $(Z|E)$ would be at most $2^{\delta n
} \cdot \left(\frac{1}{\binom{r}{r/2}}\right)^{\sqrt{n}}$. Therefore,
using standard estimates on binomial coefficients,
\begin{align*}
H(Z|E) & \geq   \sqrt{n} \cdot   \log {\binom{r}{r/2}}  - \delta n \geq \sqrt{n} \cdot r (1 - o(1)) - \delta n  .
\end{align*}

%By definition of the distribution $\mu$, $Z$ is the
%uniform distribution on $(r+1)\sqrt{n}$ bits. Now, further condition
%$Z$ on the event $E\defeq ``X\in A"$, which by \eqref{eq:othercase} has
%probability at least $2^{-\delta n}$. The following shows that this
%conditioning reduces the entropy by at most $\delta n$.
%\begin{equation*}
%H(Z|E) \geq H_\infty(Z|E) \geq H_\infty(Z) - \log
%\left(\frac1{\prob{E}}\right) = (r+1)\sqrt{n} - \delta n.
%\end{equation*}

\phnote{15 Jul: Added this para}
Observe that conditioned on $T^A_i = t^A_i, L_i = l_i, \text{ for each } i \in
[\sqrt{n}]$, the index $K_J$ can equally likely be any one of the $r\sqrt{n}$
indices in $\bigcup_i t^A_i$ (each resulting in a different value for $T$). Furthermore, from \eqref{eq:aislesslikely}, we
have that whenever $\bada(T)=1$ (which by assumption happens with probability at
least 1/6400), conditioning on $E$ causes $X_{K_J}$ to be a biased bit
and hence $H(X_{K_J}) \leq H(0.99/2)$. When $\bada(T)=0$, which
occurs with probability at most $1-1/6400$ by assumption, $H(X_{K_J})$
can be trivially bounded from above by 1. Using these facts, we can upper
bound the entropy of $(Z|E)$ as follows:
\begin{align*}
H(Z|E) & \leq \sum_{i} H(Z_i | E) \qquad \qquad
\qquad \text{[By subadditivity of entropy]}\\
&\leq  r\sqrt{n} \left(\frac{H(0.99/2)}{6400} +
  \left(1-\frac{1}{6400}\right)\right) .
\end{align*}
% \begin{align*}
% &&&H(X ~|~ X\in A, X_{l_i} =1, T^B_i = t^B_i , L_i = l_i, \mbox{ for
%   each } i \in [\sqrt{n}] ) \\
% & \leq &&(r+1)\sqrt{n} \cdot H(X_{k_j} ~|~ X\in A, X_{l_i} =1, T^B_i = t^B_i , L_i = l_i, \mbox{ for
%   each } i \in [\sqrt{n}] ) \\
% &&&\qquad \qquad \qquad\qquad \qquad \qquad \qquad\qquad \qquad \text{[By subadditivity of entropy]}\\
% & = &&(r+1)\sqrt{n} \left(\frac{1}{200} \cdot H\left(X_{k_j} ~|~ X\in A, X_{l_i} =1, T^B_i = t^B_i , L_i = l_i, \mbox{ for
%   each } i \in [\sqrt{n}], \bada(T)=1 \right) \right.\\
%  &&&+ \left. \left(1-\frac1{200}\right)\cdot H\left(X_{k_j} ~|~ X\in A, X_{l_i} =1, T^B_i = t^B_i , L_i = l_i, \mbox{ for
%   each } i \in [\sqrt{n}], \bada(T)=0\right)\right)\\
% & \leq && (r+1)\sqrt{n}\cdot \left(\frac{H(0.99/2)}{200} + \left(1-\frac{1}{200}\right)\right).
% \end{align*}
Combining the upper and lower bounds on $H(Z|E)$, we get 
$$\delta n \geq (1-H(0.99/2) - o(1)) \cdot \frac{r \sqrt{n}}{6400} . $$ 

 Thus, if  $\delta>0$ is small enough we get a contradiction.
\end{proof}
The following claim shows that a version of
\lref[Lemma]{lem:oneisdominated} is true when $\bad(t)=0$. The proofs
of this claim and the subsequent claim differ significantly from the
proofs of the corresponding claims in Razborov's result~\cite{Razborov1992} of
linear lower bound for set-disjointness. This is because we need to
consider several sub-events of $U_1$. Our arguments are more general
and in fact can also be used in the context of set-disjointness. 
\begin{claim}
\label{claim:both} Let $n$ be large enough. If $\bad(t)=0$, then,
\begin{align*}
\mu(U_1 \cap W~|~T=t)  &\leq \frac{16}{3(0.99)^2}\mu(U_2 \cap W~|~T=t)  + \frac{16}{(0.99)^2} \mu(U_0 \cap W~|~T=t) + 2^{-\delta n/2}.
\end{align*}
\end{claim}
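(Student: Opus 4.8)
The plan is to fix $t$, condition on $T=t$ (so that $(X,Y)$ is a product distribution), handle a trivial case, and then split into the three branches of the definition \eqref{eq:badt} of $\bad(t)$. For the bookkeeping, let $t'$ be the partition obtained from $t$ by swapping $l_j$ and $k_j$; since the sets $t^A_j\cup\{l_j\}$ and $t^B_j\cup\{l_j\}$ are unchanged, the conditional law of $(X,Y)$ given $T=t'$ equals that given $T=t$, so $\mu(U_i\cap W~|~T=t)=\mu(U_i\cap W~|~T=t')$ and $R(t',a,b),C(t',a,b)$ are well defined. From $\Pr[X_{l_j}=1~|~T=t]=\beta/2$ and $\Pr[X_{l_j}=X_{k_j}=1~|~T=t]=\Pr[X_{l_j}=1,X_{k_j}=0~|~T=t]=\beta/4$ (and their $Y$- and $t'$-analogues) the bits $(X_{k_j}~|~T=t,X_{l_j}=1)$ and $(X_{l_j}~|~T=t,X_{k_j}=1)$ are unbiased, so $R(t,1)=\tfrac12(R(t,1,0)+R(t,1,1))$, $R(t',1)=\tfrac12(R(t',1,0)+R(t',1,1))$, and likewise for $C$. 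Using $X\perp Y$ given $T=t$ one gets $\mu(U_2\cap W~|~T=t)=\tfrac{\beta^2}{16}R(t,1,1)C(t,1,1)$, $\rho_l(t)=\tfrac{\beta^2}{16}\big(R(t,1,0)C(t,1,0)+R(t,1,0)C(t,1,1)+R(t,1,1)C(t,1,0)\big)\le\tfrac{\beta^2}{4}R(t,1)C(t,1)$, the same for $\rho_k(t)$ with $t$ replaced by $t'$, and — keeping only the two $U_0$-inputs with $(X_{l_j},X_{k_j},Y_{l_j},Y_{k_j})\in\{(1,0,0,1),(0,1,1,0)\}$ — $\mu(U_0\cap W~|~T=t)\ge\tfrac{\beta^2}{16}\big(R(t,1,0)C(t',1,0)+R(t',1,0)C(t,1,0)\big)$. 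Finally, if $\mu(U_1\cap W~|~T=t)\le 2^{-\delta n/2}$ the claim is immediate (its right side is $\ge 2^{-\delta n/2}$); so from now on assume $\mu(U_1\cap W~|~T=t)>2^{-\delta n/2}$, which by the $\rho_l,\rho_k$ bounds forces each of $R(t,1),R(t',1),C(t,1),C(t',1)$ to exceed $\beta^{-2}2^{-\delta n/2}\gg 2^{-\delta n}$.

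In the branch $\rho_l(t)>\tfrac34\mu(U_1\cap W~|~T=t)$, the hypothesis $\bad(t)=0$ forces $\bada(t)=\badb(t)=0$, that is $R(t,1)\le R(t,1,1)/0.99+2^{-\delta n}$ and $C(t,1)\le C(t,1,1)/0.99+2^{-\delta n}$; substituting into $\mu(U_1\cap W~|~T=t)<\tfrac43\rho_l(t)\le\tfrac{\beta^2}{3}R(t,1)C(t,1)$ and bounding the $2^{-\delta n}$ cross-terms crudely via $R(t,1,1),C(t,1,1)\le1$, the main term is $\tfrac{\beta^2}{3(0.99)^2}R(t,1,1)C(t,1,1)=\tfrac{16}{3(0.99)^2}\mu(U_2\cap W~|~T=t)$ while the error is $O(2^{-\delta n})\le 2^{-\delta n/2}$; this proves (a stronger version of) the claim. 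The branch $\rho_k(t)>\tfrac34\mu(U_1\cap W~|~T=t)$ reduces to the previous one by passing to $t'$: $\bad(t)=0$ then gives $\bada(t')=\badb(t')=0$, and $\rho_k(t)=\rho_l(t')$ while $\mu(U_i\cap W~|~T=t)=\mu(U_i\cap W~|~T=t')$.

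The branch $\max\{\rho_l(t),\rho_k(t)\}\le\tfrac34\mu(U_1\cap W~|~T=t)$ is the crux, and the step I expect to be the main obstacle. Now $\rho_l(t),\rho_k(t)\ge\tfrac14\mu(U_1\cap W~|~T=t)$, and $\bad(t)=0$ delivers all four of $\bada(t)=\badb(t)=\bada(t')=\badb(t')=0$. These four conditions, with the unbiasedness relations, place each of $R(t,1,0),R(t,1,1)$ (and the $t'$- and $C$-versions) in $[0.99R(t,1)-2^{-\delta n},\,1.01R(t,1)+2^{-\delta n}]$, so in the non-trivial case the two entries of each pair agree up to a small absolute factor (roughly $1.03$ for $n$ large). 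Hence the three summands of $\rho_l(t)$ agree pairwise up to small absolute factors, so the ``diagonal'' $U_1$-mass $a:=\tfrac{\beta^2}{16}R(t,1,0)C(t,1,0)=\Pr[X_{l_j}=Y_{l_j}=1,\ X_{k_j}=Y_{k_j}=0,\ X\in A,\ Y\in B~|~T=t]$ satisfies $a\ge\rho_l(t)/3.1\ge c\,\mu(U_1\cap W~|~T=t)$ for an absolute $c>0$ (e.g. $c=\tfrac1{13}$), and symmetrically $b:=\tfrac{\beta^2}{16}R(t',1,0)C(t',1,0)\ge c\,\mu(U_1\cap W~|~T=t)$. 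Now comes \emph{cut-and-paste}: the masses $a$, $b$, $\tfrac{\beta^2}{16}R(t,1,0)C(t',1,0)$, $\tfrac{\beta^2}{16}R(t',1,0)C(t,1,0)$ are all products of an $X$-marginal with a $Y$-marginal, the last two being weights of $U_0$-inputs; hence $ab=\big(\tfrac{\beta^2}{16}R(t,1,0)C(t',1,0)\big)\big(\tfrac{\beta^2}{16}R(t',1,0)C(t,1,0)\big)\le\tfrac14\big(\tfrac{\beta^2}{16}R(t,1,0)C(t',1,0)+\tfrac{\beta^2}{16}R(t',1,0)C(t,1,0)\big)^2\le\tfrac14\mu(U_0\cap W~|~T=t)^2$, the middle step by AM--GM. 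Therefore $\mu(U_0\cap W~|~T=t)\ge 2\sqrt{ab}\ge 2c\,\mu(U_1\cap W~|~T=t)$, i.e. $\mu(U_1\cap W~|~T=t)\le\tfrac1{2c}\mu(U_0\cap W~|~T=t)\le\tfrac{16}{(0.99)^2}\mu(U_0\cap W~|~T=t)$ (as $\tfrac1{2c}=6.5$); combined with the trivial case this is the $U_0$-term of the claim.

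Why the last branch, and not the first two, is the obstacle: there the cheap estimate $\rho_l(t),\rho_k(t)\le\tfrac{\beta^2}{4}R(t,1)C(t,1)$ alone would only dominate $\mu(U_1\cap W~|~T=t)$ by a constant close to $6\,\mu(U_2\cap W~|~T=t)$, which is globally sharp since $\mu(U_1)=6\mu(U_2)$; so one must move the mass onto $U_0$ using that $W$ is a rectangle, and it is exactly the simultaneous use of the four bad-event inequalities — for $t$ \emph{and} $t'$ — that forces the diagonal $U_1$-masses to dominate $\rho_l(t),\rho_k(t)$, which is what makes the cut-and-paste transfer work. Collecting in each branch the $O(2^{-\delta n})\le 2^{-\delta n/2}$ error together with the trivial case $\mu(U_1\cap W~|~T=t)\le 2^{-\delta n/2}$ completes the proof; note that in every branch the bound proved is already at most the right-hand side of the claim, using only one of its two main terms.
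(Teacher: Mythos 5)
Your proof is correct, and its skeleton is the same as the paper's: fix $t$, split into the three branches of the definition of $\bad(t)$ by the relative sizes of $\rho_l(t),\rho_k(t)$, use $\bada,\badb$ in the first two branches to trade $\mu(U_1)$-mass for $\mu(U_2)$-mass, and use the cut-and-paste rearrangement $R(t,1,0)C(t,1,0)\cdot R(t,0,1)C(t,0,1)=R(t,1,0)C(t,0,1)\cdot R(t,0,1)C(t,1,0)$ together with all four bad-event conditions (for $t$ and $t'$) in the third branch to trade $\mu(U_1)$-mass for $\mu(U_0)$-mass. Your first two branches coincide with the paper's verbatim.

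In the third branch the bookkeeping differs slightly, in a way worth noting. The paper chains inequalities $\tfrac14\mu(U_1\cap W\mid T=t)\le\rho_l(t)\le\tfrac{\beta^2}{4}R(t,1)C(t,1)\le\tfrac{\beta^2}{4(0.99)^2}R(t,1,0)C(t,1,0)+2^{-\delta n}$ (and the $\rho_k$ analogue), multiplies them, cut-and-pastes, bounds the result by $\mu(U_0\cap W\mid T=t)^2$, and only then takes a square root — which causes the $2^{-\delta n}$ error to degrade to $2^{-\delta n/2}$ \emph{in this branch}. You instead peel off a trivial sub-case $\mu(U_1\cap W\mid T=t)\le 2^{-\delta n/2}$ up front; in the remaining regime $R(t,1),C(t,1)\gg 2^{-\delta n}$, so the four $\bad=0$ conditions become clean multiplicative $\approx 1.03$-closeness statements between the conditional acceptance probabilities, which lets you show the diagonal masses $a,b$ are each at least a fixed constant fraction of $\mu(U_1\cap W\mid T=t)$ and then run cut-and-paste plus AM--GM directly, with \emph{no} residual additive error in this branch. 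This is the same cut-and-paste argument repackaged, but your version handles the low-order terms a bit more cleanly (the paper's third branch, as written, actually produces $4\cdot 2^{-\delta n/2}$ rather than $2^{-\delta n/2}$, which it silently absorbs). One tiny imprecision: your blanket statement after the trivial-case reduction that \emph{all four} of $R(t,1),R(t',1),C(t,1),C(t',1)$ exceed $\beta^{-2}2^{-\delta n/2}$ only follows from $\min\{\rho_l,\rho_k\}\ge\mu(U_1\cap W\mid T=t)/4$, i.e.\ in the third branch; but that is the only place you use it, so nothing breaks.
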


\ifnum\fsttcs=0
\begin{proof}[Proof of {\lref[Claim]{claim:both}}]
Recall the definition of $\bad(t)$ from \eqref{eq:badt}. We will
consider three cases depending on the relative sizes of $\rho_l(t)$
and $\rho_k(t)$ with respect to $\mu(U_1 \cap W ~|~ T=t)$.
\begin{itemize}
\item $\rho_l(t) > 3\mu(U_1 \cap W~|~T=t)/4$:
In this case, we have $\bada(t) \vee \badb(t) = 0$. We can now bound
$\mu(U_1 \cap W ~|~ T=t)$ as follows.
\begin{align*}
 \frac{3}{4}\cdot \mu(U_1 \cap W~|~T=t) & < \rho_l(t) \leq \prob{X_{l_j} =Y_{l_j}=1, X \in A, Y \in B ~|~ T=t} \\
& =  \prob{X_{l_j} =Y_{l_j}=1~|~ T=t}  \cdot R(t,1) \cdot C(t,1) \\
& \leq \frac{\beta^2}{4} \cdot \left(\frac{R(t,1,1)}{0.99} + 2^{-\delta n} \right)
\cdot \left(\frac{C(t,1,1)}{0.99} + 2^{-\delta n} \right) \\
& \leq \frac{\beta^2}{4(0.99)^2}(R(t,1,1)  C(t,1,1))  + 2^{-\delta n} \\
& = \frac{4}{(0.99)^2}\cdot \mu(U_2 \cap W~|~T=t) + 2^{-\delta n} .
\end{align*}
\item $\rho_k(t)> 3\mu(U_1 \cap W~|~T=t)/4$:
Similar arguments as above show 
\begin{align*}
 \frac{3}{4}\cdot \mu(U_1 \cap W~|~T=t) & < \frac{4}{(0.99)^2}\cdot \mu(U_2 \cap W~|~T=t) + 2^{-\delta n} .
\end{align*}
\item $\min\{\rho_l(t),\rho_k(t)\} \geq \mu(U_1 \cap W~|~T=t)/4$:  From
  $\rho_l(t) \geq \mu(U_1 \cap W~|~T=t)/4$, we have
\begin{align*}
\frac{1}{4}\cdot \mu(U_1 \cap W~|~T=t) & \leq \rho_l(t) \leq \prob{X_{l_j} =Y_{l_j}=1, X \in A, Y \in B ~|~ T=t} \\
& = \frac{\beta^2}{4}\cdot R(t,1) C(t,1) \\
& \leq \frac{\beta^2}{4} \cdot \left(\frac{R(t,1,0)}{0.99} + 2^{-\delta n} \right)
\cdot \left(\frac{C(t,1,0)}{0.99} + 2^{-\delta n} \right) \\
& \leq \frac{\beta^2}{4(0.99)^2}(R(t,1,0)  C(t,1,0))  + 2^{-\delta n}. 
\end{align*}
Similarly from $\rho_k(t) \geq \mu(U_1 \cap W~|~T=t)/4$, we have
\begin{align*}
\frac{1}{4} \cdot \mu(U_1 \cap W~|~T=t) & \leq \prob{X_{k_j} =Y_{k_j}=1, X \in A, Y \in B ~|~ T=t} \\
& \leq \frac{\beta^2}{4(0.99)^2}(R(t,0,1)  C(t,0,1))  + 2^{-\delta n}. 
\end{align*}
Multiplying the above two inequalities we have,
\begin{align}
& \left(\frac{1}{4} \cdot \mu(U_1 \cap W~|~T=t) \right)^2 \notag\\
& \leq \left(\frac{\beta^2}{4(0.99)^2}\cdot (R(t,1,0)  C(t,1,0))  +
  2^{-\delta n}\right) \left( \frac{\beta^2}{4(0.99)^2}\cdot (R(t,0,1)  C(t,0,1))  + 2^{-\delta n}\right)\notag\\
& \leq  \frac{\beta^4}{4^2(0.99)^4} \cdot\left(R(t,1,0)  C(t,1,0) R(t,0,1)
  C(t,0,1)\right) + 2^{-\delta n} \label{eq:before}\\
& =  \frac{\beta^4}{4^2(0.99)^4} \cdot\left(R(t,1,0)  C(t,0,1) R(t,0,1)
  C(t,1,0)\right) + 2^{-\delta n} \label{eq:after}\\
& =  \frac{4^2}{(0.99)^4} \cdot \prob{(X_{l_j}, X_{k_j},Y_{l_j},Y_{k_j}) = (0,1,1,0) , X \in A, Y \in B ~|~T=t}  \cdot \notag\\
& \qquad \qquad  \prob{(X_{l_j}, X_{k_j},Y_{l_j},Y_{k_j}) = (1,0,0,1) , X \in A, Y \in B ~|~T=t}  + 2^{-\delta n} \notag\\
& \leq \frac{4^2}{(0.99)^4} \cdot \left(\mu(U_0 \cap W~|~T=t)\right)^2 + 2^{-\delta n}.\label{eq:final}
\end{align}
Observe that \eqref{eq:after} is obtained from \eqref{eq:before} by
re-ordering the terms, which in communication complexity jargon is
more commonly referred to as the
cut-and-paste-property. \eqref{eq:final} implies,
\begin{align*}
 \frac{1}{4}\cdot \mu(U_1 \cap W~|~T=t)  \leq \frac{4}{(0.99)^2} \cdot \mu(U_0 \cap W~|~T=t) + 2^{-\delta n/2}.
\end{align*}
\end{itemize}
Combining the three cases yields the claim.
\end{proof}
\fi

The following claim argues that not much probability is lost when  $\bad(T)=1$. 

\begin{claim} \label{claim:badsmallforone}  Let $n$ be large enough. Then, 
$$\expec{t \leftarrow T}{\mu(U_1 \cap W~|~ T=t) \cdot \bad(t)} \leq
\frac{1}{100}\cdot \expec{t \leftarrow T}{\mu(W \cap U_1~|~T=t)} + 2^{-\delta n +3}.$$
\end{claim}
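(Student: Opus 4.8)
The plan is to bound the contribution of the bad event to the $U_1$-mass by relating it to the quantities $\rho_l, \rho_k$ and then invoking Claim~\ref{claim:badsmall}. First I would recall from the definition \eqref{eq:badt} that $\bad(t)$ is built out of the four indicators $\bada(t), \badb(t), \bada(t'), \badb(t')$. Since $\mu(U_1 \cap W \mid T=t) = \rho_l(t) + \rho_k(t)$, and since in the first case of \eqref{eq:badt} we have $\rho_l(t) > 3\mu(U_1 \cap W \mid T=t)/4$ so that $\rho_k(t) < \mu(U_1\cap W\mid T=t)/4$, the strategy is to split $\mu(U_1 \cap W \mid T=t)\cdot\bad(t)$ into the piece that comes from $\rho_l(t)$ (controlled by $\bada(t)\vee\badb(t)$) and the piece from $\rho_k(t)$ (controlled by $\bada(t')\vee\badb(t')$), in each case overcounting: write
\[
\mu(U_1\cap W\mid T=t)\cdot\bad(t) \;\le\; \rho_l(t)\cdot\bigl(\bada(t)\vee\badb(t)\bigr) \;+\; \rho_k(t)\cdot\bigl(\bada(t')\vee\badb(t')\bigr)\;+\;(\text{cross terms}),
\]
using that in case 1 the $\rho_k$ term is already at most $\tfrac14\mu(U_1\cap W\mid T=t)$ (so a constant-factor slack absorbs it), and symmetrically for case 2, while in case 3 all four indicators are available. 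The key observation making this work is that $\rho_k(t) = \rho_l(t')$: forcing $X_{k_j}=Y_{k_j}=1$ under $T=t$ is the same event as forcing $X_{l_{j}}=Y_{l_j}=1$ under $T=t'$, since $t'$ just swaps the roles of $k_j$ and $l_j$. Hence taking expectation over $T$ and using that the map $t\mapsto t'$ preserves the distribution of $T$ (it is a measure-preserving involution on the support), the $\rho_k$-terms are converted into $\rho_l$-terms, and it suffices to bound $\expec{t\leftarrow T}{\rho_l(t)\cdot(\bada(t)\vee\badb(t))}$.

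Next I would bound $\expec{t\leftarrow T}{\rho_l(t)\cdot(\bada(t)\vee\badb(t))}$. Since $\rho_l(t) \le \mu(U_1\cap W\mid T=t)$, a naive bound would give $\expec{t\leftarrow T}{\mu(U_1\cap W\mid T=t)\cdot(\bada(t)\vee\badb(t))}$, and the point is that $\bada, \badb$ are rare. But $\bada(T)$ and $\mu(U_1\cap W\mid T=t)$ are correlated, so I cannot simply multiply probabilities; this is where Claim~\ref{claim:badsmall} in its \emph{conditional} form is essential. The trick is that $\bada(t)$ depends only on $A$ and on the part of $t$ visible to Alice, namely $(T^A_i, L_i)_i$, while conditioned on $(T^A_i = t^A_i, L_i = l_i)_i$ the claim gives $\prob{\bada(T)=1\mid (T^A_i,L_i)_i} < 1/6400$. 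So I would condition on $(T^A_i, L_i)_i$, pull $\mu(U_1\cap W\mid T=t)$ out as a bounded-in-expectation quantity, and use that within each such conditioning block $\bada$ is a rare event; similarly for $\badb$ using $(T^B_i, L_i)_i$. A clean way to organize this: $\mu(U_1\cap W\mid T=t)$ is bounded above by $\beta^2/16 \cdot$ (something) $\le 1$ pointwise in the relevant range, but more usefully one writes $\rho_l(t)\le \prob{X_{l_j}=Y_{l_j}=1\mid T=t}\cdot R(t,1)\cdot C(t,1) = (\beta^2/4)R(t,1)C(t,1)$, and $R(t,1)$ (resp.\ $C(t,1)$) depends only on $A$ and $(T^A_i,L_i)_i$ (resp.\ on $B$ and $(T^B_i,L_i)_i$). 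Then
\[
\expec{t\leftarrow T}{\rho_l(t)\cdot\bada(t)} \le \tfrac{\beta^2}{4}\,\expec{t\leftarrow T}{R(t,1)\,C(t,1)\,\bada(t)},
\]
and conditioning on $(T^A_i,L_i)_i$ makes $R(t,1)\cdot\bada(t)$ a function of that conditioning together with the (independent, under the conditioning) choice of $K_J$, so that $\expec{}{R(t,1)\bada(t)\mid (T^A_i,L_i)_i} \le \prob{\bada(T)=1\mid(T^A_i,L_i)_i}\cdot\max_t R(t,1) < (1/6400)\cdot 1$ — and $C(t,1)\le 1$ — yielding a bound of order $(\beta^2/4)/6400$, which is well below $1/200$ after accounting for the factor-of-$2$ overcounting from $\bada\vee\badb$ and the two symmetric terms from $t$ and $t'$. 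The additive $2^{-\delta n + 3}$ slack comes from the $2^{-\delta n}$ terms hidden in the definitions of $\bada, \badb$ (the ``$R(t,1) - 2^{-\delta n}$'' correction) and from rounding; I would track these by noting that wherever $\bada(t)=0$ we only have $\min\{R(t,1,1),R(t,1,0)\}\ge 0.99(R(t,1)-2^{-\delta n})$, so the exceptional $2^{-\delta n}$-sized events contribute at most $O(\sqrt n)\cdot 2^{-\delta n} \le 2^{-\delta n+3}$ in total across all blocks.

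The main obstacle I expect is handling the correlation between $\bad(T)$ and $\mu(U_1\cap W\mid T=t)$ correctly: one must be careful that $\bad(t)$, through the $t'$-terms, depends on $K_J$ (since $t'$ swaps $k_j$ and $l_j$), so it is \emph{not} purely a function of $(T^A_i,L_i)_i$ — but $\bada(t')$ asks about the bit in position $l_j$ after the swap, which is the ``old'' $l_j$, and the structure $t^A_j = \tilde t^A_j \cup \{k_j\}$ means the relevant Alice-side data is still within $\bigcup_i t^A_i$. The right move is to condition on the full Alice-view $(T^A_i, L_i)_i$ \emph{and} to use that, given this, $K_J$ is uniform over $\bigcup_i t^A_i$ and the various $\bada$-type events under both $t$ and $t'$ remain rare in the appropriate conditional sense; Claim~\ref{claim:badsmall} as stated already conditions on exactly $(T^A_i = t^A_i, L_i = l_i)_i$, so it applies directly. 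A secondary nuisance is bookkeeping the constants so that the two $\bada\vee\badb$ terms (from $t$ and from $t'$), each at most roughly $\beta^2/(4\cdot 6400)\cdot 2 \le 1/400$ after the union bound, sum to at most $1/100$; since $\beta = (r+2)/(r+1) < 2$ this is comfortable, and I would present it with the explicit slack already built into the statement ($1/100$ rather than a tighter constant) precisely so that these estimates need not be optimized.
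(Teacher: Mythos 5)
Your top-level decomposition and use of the involution $t\mapsto t'$ match the paper's approach: the paper likewise shows $\mu(U_1\cap W\mid T=t)\,\bad(t)\le 4\bigl(\rho_l(t)\,\badaorb(t)+\rho_k(t)\,\badaorb(t')\bigr)$ by the same three-case analysis, then reduces the $\rho_k$ term to a symmetric one. The problem is in the final step, and it is a genuine gap, not a matter of constants.

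You bound $\expec{t\leftarrow T}{\rho_l(t)\,\bada(t)}\le\tfrac{\beta^2}{4}\,\expec{t}{R(t,1)\,C(t,1)\,\bada(t)}$ and then crush $R(t,1)\le 1$, $C(t,1)\le 1$ to get an \emph{absolute} bound of order $\beta^2/(4\cdot 6400)$, which you compare against the absolute number $1/200$. But the claim asks for a bound that is \emph{multiplicative in} $\expec{t}{\mu(U_1\cap W\mid T=t)}=\mu(U_1\cap W)$, and in the regime where Lemma~\ref{lem:oneisdominated} is applied this quantity can be as small as roughly $2^{-\delta n/2}$. An absolute $1/6400$ on the left-hand side is therefore vastly too large; the additive slack allowed is only $2^{-\delta n+3}$. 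Once you throw away the $R$ and $C$ factors you have no way to recover the factor $\mu(U_1\cap W)$.

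The paper avoids this by not using the crude $R,C\le 1$ step. It first proves the pointwise inequality
$\rho_l(t)\,\badaorb(t)\le\tfrac12\bigl(R(t,1,0)\,C(t,1)\,\badb(t)+R(t,1)\,C(t,1,0)\,\bada(t)+2^{-\delta n}\bigr)$
via a short case analysis on $(\bada(t),\badb(t))$. The point of the specific pairings $R(t,1,0)\,C(t,1)$ with $\badb(t)$ (and symmetrically) is twofold. First, conditioned on $t_B=(t^B_i,l_i)_i$, the factor $R(t,1,0)\,C(t,1)$ is independent of $K_J$ while $\badb(t)$ is a function of $(t_B,k_j)$, so one can integrate out $k_j$ and apply Claim~\ref{claim:badsmall} directly to pull out a genuine factor of $1/6400$, just as you intended. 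Second --- and this is the part your argument loses --- the resulting quantity $\expec{t}{R(t,1,0)\,C(t,1)}$ can be compared back to $\expec{t}{\mu(U_1\cap W\mid T=t)}$ (up to a constant like $8$), because $R(t,1,0)\,C(t,1,0)$ is exactly the quantity whose expectation gives $\mu(U_1\cap W)$, and $C(t,1)$ averages $C(t,1,0)$ and $C(t,1,1)$ with a symmetry that keeps the expectation comparable. Your choice $R(t,1)\,C(t,1)$ has an $R(t,1,1)\,C(t,1,1)$ component inside, which relates to $\mu(U_2\cap W)$ rather than $\mu(U_1\cap W)$; even if you kept $R$ and $C$ instead of bounding them by $1$, you could not cleanly re-express the expectation in terms of $\mu(U_1\cap W)$ alone. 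So the fix is not just to ``keep the $C(t,1)$'': you need the paper's asymmetric split of $\rho_l(t)\,\badaorb(t)$ into a $\bada$-piece and a $\badb$-piece, each paired with the specific product that both conditions out the right variable and matches the $U_1$-mass.

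One smaller inaccuracy: you initially write ``$\bada(t)$ depends only on $A$ and on the part of $t$ visible to Alice, namely $(T^A_i,L_i)_i$,'' which is not quite right since $\bada(t)$ also depends on $k_j$; you implicitly correct this later. The paper's formulation is to treat $\bada$ as a function of $(t_A,k_j)$ and then average over $k_j$ using Claim~\ref{claim:badsmall}, which is what you also eventually do.
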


\ifnum\fsttcs=0
\begin{proof}[Proof of {\lref[Claim]{claim:badsmallforone}}]
  For a partition $t$, define $\badaorb(t) =1$ if either
  $\bada(t)=1$ or $\badb(t)=1$.  We first show that for all
  partitions $t$, 
\begin{equation}
\mu(U_1 \cap W ~|~ T=t) \cdot \bad(t) \leq 4\left( \rho_l(t) \cdot
  \badaorb(t) + \rho_k(t) \cdot \badaorb(t')\right).\label{eq:u1badbound}
\end{equation}

As before, we consider three cases depending on the relative sizes of
$\rho_l(t)$ and $\rho_k(t)$ with respect to $\mu(U_1 \cap W ~|~ T=t)$.
\begin{itemize}
\item $\rho_l(t) > 3\mu(U_1 \cap W~|~T=t)/4$: In this case, we have
  $\bad(t) = \badaorb(t)$. Thus,\\ $\mu(U_1 \cap W ~|~ T=t) \cdot \bad(t) \leq \frac{4}{3} \cdot \rho_l(t)
\cdot \badaorb(t)$.
\item $\rho_k(t)> 3\mu(U_1 \cap W~|~T=t)/4$: In this case, we have
  $\bad(t) = \badaorb(t')$. Thus,\\ $\mu(U_1 \cap W ~|~ T=t) \cdot \bad(t) \leq \frac{4}{3} \cdot \rho_k(t)
\cdot \badaorb(t')$.
\item $\min\{\rho_l(t),\rho_k(t)\} \geq \mu(U_1 \cap W~|~T=t)/4$: 
In this case, we have $\bad(t) \leq \badaorb(t) + \badaorb(t')$. Hence, we have
\begin{eqnarray*}
\mu(U_1 \cap W ~|~ T=t) \cdot \bad(t) &\leq \mu(U_1 \cap W ~|~ T=t)
\cdot \left(\badaorb(t) + \badaorb(t')\right)\\
& \leq 4\left( \rho_l(t) \cdot \badaorb(t) + \rho_k(t)
\cdot \badaorb(t') \right).
\end{eqnarray*}
\end{itemize}
The bound in \eqref{eq:u1badbound} follows from combining the three cases.

We now argue that 
\begin{equation}
\expec{ t \leftarrow T}{\rho_l(t) \cdot \badaorb(t)} \leq \frac{1}{800} \cdot \expec{t \leftarrow T}{\mu(W \cap
  U_1~|~T=t)} + 2^{-\delta n}.\label{eq:rholbadbound}
\end{equation}
A similar bound holds for $\expec{ t \leftarrow T}{\rho_k(t) \cdot \badaorb(t')}$. Combining these two bounds with \eqref{eq:u1badbound}
yields the statement of the claim. 

We prove \eqref{eq:rholbadbound} by first showing that for each
partition $t$, we have
\begin{equation}
\rho_l(t) \cdot \badaorb(t) \leq \frac{1}{2} \cdot
\left(R(t,1,0) \cdot C(t,1)\cdot \badb(t) + R(t,1)\cdot C(t,1,0)\cdot
  \bada(t) +  2^{-\delta n} \right) .\label{eq:rholrtct}
\end{equation}
We consider various cases depending on the values of $\bada(t)$ and
$\badb(t)$.
\begin{itemize}
\item $\bada(t) = \badb(t)$: We first bound
  $\rho_l(t)$ as follows:
\begin{eqnarray*}
\rho_l(t) & = & \prob{X_{l_j} =Y_{l_j}=1, X \in A, Y \in B,
  (X,Y) \in U_1 ~|~ T=t},\\
& \leq & \prob{X_{l_j} =Y_{l_j}=1, X_{k_j}=0, X \in A, Y \in B,  ~|~
  T=t} \\
& & \quad +\prob{X_{l_j} =Y_{l_j}=1, , Y_{k_j}=0, X \in A, Y \in B  ~|~
  T=t},\\
& = & \frac{\beta^2}{8}\left( R(t,1,0)\cdot C(t,1) + R(t,1)\cdot
  C(t,1,0)\right).
\end{eqnarray*}
\eqref{eq:rholrtct} then follows by observing that in this case
$\badaorb(t) = \bada(t) = \badb(t)$.
\item $\bada(t) = 1, \badb(t)=0$: Since $\badb(t) =0$, we have that
  $C(t,1) \leq C(t,1,0)/0.99 + 2^{-\delta n}$. We now bound $\rho_l(t)$ as follows.
\begin{eqnarray*}
\rho_l(t) & \leq & \prob{X_{l_j} =Y_{l_j}=1, X \in A, Y \in B, ~|~
  T=t},\\
& = &\frac{\beta^2}{4} \cdot R(t,1) \cdot C(t,1) \leq \frac{\beta^2}{4(0.99)} \cdot \left( R(t,1)\cdot C(t,1,0) + 2^{-\delta n} \right)
\end{eqnarray*}
\eqref{eq:rholrtct} then follows by observing that in this case
$\badaorb(t) = \bada(t)$.
\item $\bada(t) = 0, \badb(t) =1$:
This case is similar to the above case.
\end{itemize}

We now bound $\expec{ t \leftarrow T}{R(t,1,0)\cdot C(t,1)\cdot
  \badb(t)}$. We will bound this expectation by setting the random
variable $T$ in stages: we first set $t_B=\{t^B_i , l_i  : ~ i \in
[\sqrt{n}]\}$, and then set the variable $k_j \in [n]$ from the distribution $(K_J ~|~ T_B = t_B)$.  We observe that $C(t,1)$ is only a function of $t_B$ and independent of $k_j$; thus, $C(t,1) = c(t_B)$ for some function
$c$. Similarly $R(t,1,0)$ is only a function of $t_B$ and is independent of $k_j$; thus, $R(t,1,0) = r(t_B)$ for some function $r$. We have $\badb(t) = b(t_B,k_j)$ for some function $b$. In this notation,
\lref[Claim]{claim:badsmall} states that for all $t_B$, $\expec{ k_j
  \gets K_J|T_B=t_B}{b(t_B,k_j)} \leq 1/6400$. 
%Now,
%\begin{eqnarray*}
%\expec{ t \gets T}{R(t,1,0)\cdot C(t,1)\cdot \badb(t)}
%& = & \expec{t_B \gets T_B}{c(t_B)\cdot \expec{(k_j,d_j)\gets (K_J, D_J)| T_B =
%    t_B}{ b(t_B,k_j) \cdot R(t,1,0)}}\\
%& \leq  2 \cdot & \expec{t_B \gets T_B}{c(t_B) \cdot r(t_B) \cdot \expec{k_j\gets K_J| T_B =
%    t_B}{ b(t_B,k_j)}}\\ 
%& \leq & \frac{1}{6400} \cdot  \expec{t_B \gets T_B}{c(t_B)\cdot r(t_B)}\\
%& =& \frac{1}{6400} \expec{t\gets T}{R(t,1,0) \cdot C(t,1)}\\
%& \leq & \frac{8}{6400} \expec{ t\gets T}{\mu(U_1 \cap W | T=t)}.
%\end{eqnarray*}
\begin{eqnarray*}
\expec{ t \gets T}{R(t,1,0)\cdot C(t,1)\cdot \badb(t)}
& = & \expec{t_B \gets T_B}{c(t_B)\cdot r(t_B) \cdot \expec{k_j\gets K_J| T_B =
    t_B}{ b(t_B,k_j) }}\\
& = & \expec{t_B \gets T_B}{c(t_B) \cdot r(t_B) \cdot \expec{k_j\gets K_J| T_B =
    t_B}{ b(t_B,k_j)}}\\ 
& \leq & \frac{1}{6400} \cdot  \expec{t_B \gets T_B}{c(t_B)\cdot r(t_B)}\\
& =& \frac{1}{6400} \cdot \expec{t\gets T}{R(t,1,0) \cdot C(t,1)}\\
& \leq & \frac{8}{6400} \cdot \expec{ t\gets T}{\mu(U_1 \cap W | T=t)}.
\end{eqnarray*}

% We now bound $\expec{ t \leftarrow T}{R(t,1,0)\cdot C(t,1)\cdot
%   \badb(t)}$.

% %\phnote{I am not fully convinced that $R(t,1,0)$ is fixed below.}

% Fix $\{t^B_i , l_i  : ~ i \in [\sqrt{n}]\}$. Below we condition on the event $T^B_i = t^B_i , L_i = l_i, \mbox{ for each } i \in [\sqrt{n}]$. 
%  Note that under this conditioning the values of $C(t,1)$ are the same (say $c$) for every $t$ consistent with $\{t^B_i , l_i  : ~ i \in [\sqrt{n}]\}$. Also for every $k_j$, $\expec{d_j \leftarrow D_J|(K_J = k_j)}{R(t,1,0)}$ is the same (say $r$) . Therefore,  
% \begin{align*}
% & \expec{t \leftarrow T}{R(t,1,0) \cdot C(t,1) \cdot \badb(t)} \\
% & = \expec{k_j \leftarrow K_J}{\left(\expec{d_j \leftarrow D_J|(K_J = k_j)}{R(t,1,0)}\right) \cdot \left(\expec{d_j \leftarrow D_J|(K_J = k_j)}{ C(t,1) \cdot \badb(t)}\right)} \\
% & =  (c \cdot r) \expec{t \leftarrow T}{\badb(t)} \\
% & \leq   \frac{(c \cdot r)}{6400} \quad \mbox{(from \lref[Claim]{claim:badsmall})}\\
% & =  \frac1{6400}\cdot  \expec{t \leftarrow T}{R(t,1,0) \cdot C(t,1)}\\
% & \leq \frac{8}{6400} \cdot \expec{t \leftarrow T}{\mu(U_1 \cap W ~|~ T=t) }
% \end{align*}
Hence, $$\expec{t \leftarrow T}{R(t,1,0) \cdot C(t,1) \cdot \badb(t)}
\leq \frac1{800}\expec{t \leftarrow T}{\mu(U_1 \cap W ~|~ T=t )}.$$ A similar
bound holds for $\expec{t \leftarrow T}{R(t,1) \cdot C(t,1,0) \cdot
  \bada(t)}$. Combining these bounds with \eqref{eq:rholrtct} yields
\eqref{eq:rholbadbound} which completes the proof of the claim.

\end{proof}
\fi

\ifnum\fsttcs=1
For want of space, the 
proofs of Claims~\ref{claim:both}-\ref{claim:badsmallforone} are
deferred to the full version of the paper~\cite{HarshaJ2013-arXiv}.
\fi

\lref[Lemma]{lem:oneisdominated} follows by combining
\lref[Claim]{claim:both} and \lref[Claim]{claim:badsmallforone} as
follows.

\begin{align*}
& 0.99 \mu(U_1 \cap W)\\
 = & 0.99 \expec{t \gets T}{\mu(U_1 \cap W ~|~
  T=t)}\\
\leq & \expec{t \gets T}{\mu(U_1 \cap W ~|~ T=t) \cdot (1-\bad(t))} + 2^{-\delta n+3}
\qquad \text{(from \lref[Claim]{claim:badsmallforone})}\\
 \leq & \expec{t\gets T}{\left(\frac{16\mu(U_2 \cap
    W~|~T=t)}{3(0.99)^2} + \frac{16 \mu(U_0 \cap W~|~T=t)}{(0.99)^2}  +
    2^{-\delta n/2}\right) (1-\bad(t))} \\
&\qquad \qquad \quad \quad+ 2^{-\delta n+3} \qquad \text{(from \lref[Claim]{claim:both})}\\
 \leq & \frac{16}{3(0.99)^2}\cdot \mu(U_2 \cap
    W)  + \frac{16}{(0.99)^2} \cdot \mu(U_0 \cap W) +
    2^{-\delta n/2 +1}
\end{align*}
\qed

\section*{Acknowledgements}

We thank Jaikumar Radhakrishnan for several useful discussions and the
anonymous reviewers for useful comments.

{\small
\ifnum\fsttcs=1

\else
%\bibliographystyle{prahladhurl}
%\bibliography{../tribes-bib}
\newcommand{\etalchar}[1]{$^{#1}$}

\fi
}

\end{document}